\newtheorem{theorem}{Theorem}[section]
\newtheorem{lemma}{Lemma}[section]
\newtheorem{definition}{Definition}[section]
\newtheorem{fact}{Fact}
\newcounter{noqed}
\newcommand{\qed}{ \ifmmode\mbox{ }\fi\rule[-.05em]{.3em}{.7em}\setcounter{noqed}{0}}
\newenvironment{proof}[1][{}]{\noindent{\bf Proof#1. }\setcounter{noqed}{1}}{\ifnum\value{noqed}=1\qed\fi\par\medskip}
\newcommand{\adj}{\,\text{\textemdash}\,}
\newcommand{\nott}[1]{\overline{#1}}
\newcommand{\hh}{^{\vphantom{-}}}
\newcommand{\primehh}{^{\prime\vphantom{-}}}
\title{Score and Rank Semi-Monotonicity for Closeness, Betweenness and Harmonic Centrality}
\author{\textbf{Paolo Boldi \;\; Davide D'Ascenzo \;\; Flavio Furia \;\; Sebastiano Vigna}}
\date{%
    Dipartimento di Informatica, Università degli Studi di Milano, Italy\\[2ex]%
    \today
}
\begin{document}
\maketitle

\begin{abstract}
    In the study of the behavior of centrality measures with respect to network
    modifications,
    \emph{score monotonicity} means that adding an arc increases the centrality score of the
    target of the arc; \emph{rank monotonicity} means that adding an arc improves the
    importance of the target of the arc relative to the remaining nodes. It is known~\cite{axioms,rank_mon} that score and rank monotonicity hold in directed graphs for
    almost all the classical centrality measures.
    In undirected graphs one expects that the corresponding properties (where both endpoints of the new edge enjoy the increase in score/rank) hold when adding a new edge. However,
    recent results~\cite{boldi_furia_vigna_2023} have shown that in undirected networks this is
    not true: for many centrality measures, it is possible to find situations where adding an edge reduces the rank of one of its two endpoints. In this paper we introduce a weaker condition for undirected networks, \emph{semi-monotonicity},
    in which just one of the endpoints of a new edge is required to enjoy score
    or rank monotonicity. We show that this condition is satisfied by closeness and betweenness centrality,
    and that harmonic centrality satisfies it in an even stronger sense.
\end{abstract}

\section{Introduction and Definitions}

In this paper we discuss the behavior of centrality measures in undirected networks after the addition of a new edge. In particular, we are interested
in the following question: if a new edge is added to a network, does the
importance of \emph{at least one} of its two endpoints increase?
This question was left open in~\cite{boldi_furia_vigna_2023}, where the authors proved that for many centrality measures it is possible to find situations where adding an edge reduces the rank of one of its two endpoints. Note that these results are in jarring contrast with the corresponding properties for directed networks, where it is known~\cite{axioms,rank_mon} that score and rank monotonicity hold for almost all centrality measures.

Formally, in this paper we introduce \emph{semi-monotonicity}, a weaker condition than monotonicity for undirected networks in which we require that \emph{at least one} endpoint of the new edge enjoys monotonicity.
Score semi-monotonicity, in particular, means that adding a new edge increases the score of at least one of the two endpoints:
\begin{definition}[Score semi-monotonicity]
    Given an undirected graph $G$, a centrality $c$ is said to be \emph{score semi-monotone on $G$} iff for every pair of non-adjacent vertices $x$ and $y$ we have that
    \begin{equation*}
        c_{G}(x) < c_{G'}(x)\quad{or}\quad c_{G}(y) < c_{G'}(y),
    \end{equation*}
    where $G'$ is the graph obtained adding the edge $x\adj y$ to $G$.
    We say that $c$ is \emph{score semi-monotone on a set of graphs} iff it is score semi-monotone on all the graphs from the set.
\end{definition}

As we already know from the directed case, a score increase does not imply that the rank relations between the two vertices involved in the new edge and the other vertices in the network remain unchanged. For this reason, rank monotonicity was introduced, where we require that every vertex that used to be dominated is still dominated after the addition of the new edge. Formally, the request for at least one of the two endpoints can be expressed as follows:
\begin{definition}[Rank semi-monotonicity]
    \label{def:rsemi}
    Given an undirected graph $G$, a centrality $c$ is said to be \emph{rank semi-monotone on $G$} iff for every pair of non-adjacent vertices $x$ and $y$ at least one of the following two statements holds:
    \begin{itemize}
        \item for all vertices $z\neq x,y$:
              \begin{align*}
                   & c_{G}(z) < c_{G}( x) \Rightarrow c_{G'}(z) < c_{G'}(x) \text{ and} \\
                   & c_{G}(z) = c_{G}(x) \Rightarrow c_{G'}(z) \leq c_{G'}(x),
              \end{align*}
        \item  for all vertices $z\neq x,y$:
              \begin{align*}
                   & c_{G}(z) < c_{G}(y) \Rightarrow c_{G'}(z) < c_{G'}(y) \text{ and} \\
                   & c_{G}(z) = c_{G}(y) \Rightarrow c_{G'}(z) \leq c_{G'}(y),
              \end{align*}
    \end{itemize}
    where $G'$ is the graph obtained adding the edge $x\adj y$ to $G$.
    We say that $c$ is \emph{rank semi-monotone on a set of graphs} iff it is rank semi-monotone on all the graphs from the set.
\end{definition}

In particular, we say that $c$ is rank semi-monotone at $x$ if the first statement holds, and rank semi-monotone at $y$ if the second statement holds (if both statements hold, $c$ is rank monotone).

\begin{definition}[Strict rank semi-monotonicity]
    \label{def:srsemi}
    Given an undirected graph $G$, a centrality $c$ is said to be \emph{strictly rank semi-monotone on $G$} iff for every pair of non-adjacent vertices $x$ and $y$ at least one of the following two statements holds:
    \begin{itemize}
        \item for all vertices $z\neq x,y$:
              $c_{G}(z) \leq c_{G}(x) \Rightarrow c_{G'}(z) < c_{G'}(x)$,
        \item  for all vertices $z\neq x,y$:
              $c_{G}(z) \leq c_{G}(y) \Rightarrow c_{G'}(z) < c_{G'}(y)$,
    \end{itemize}
    where $G'$ is the graph obtained adding the edge $x\adj y$ to $G$.
    We say that $c$ is \emph{strictly rank semi-monotone on a set of graphs} iff it is strictly rank semi-monotone on all the graphs from the set.
\end{definition}

Again, we say that it is strictly rank semi-monotone at $x$ if the first statement holds, and strictly rank semi-monotone at $y$ if the second statement holds (if both statements hold, $c$ is strictly rank monotone).

In the rest of the paper, we assume that we are given an undirected connected graph $G$, and two non-adjacent vertices $x,y\in N_{G}$; $G'$ will be the graph obtained by adding the edge $x\adj y$ to $G$.
From now on, $d_{uv}$ will refer to the distance (i.e., the length of a shortest path) between $u$ and $v$ in $G$ (i.e., before the $x\adj y$ addition), and $d'_{uv}$ will refer to the distance in $G'$ instead. In general, we will use the prime symbol to refer to any property or function of $G$ when translated to $G'$.

\section{Distances and Basins}

Geometric centrality measures~\cite{axioms} depend only on distances between
vertices. In the next two sections we are going to prove new results about the
semi-monotonicity of two geometric centrality measures---\emph{closeness centrality}~\cite{bavelas,bavelas2} and \emph{harmonic
    centrality}~\cite{beauchamp,axioms}. To understand the semi-monotonic
behavior of these centrality measures, we introduce a key notion:
\begin{definition}[Basin]
    Given an undirected graph $G$ and two
    non-adjacent vertices $x$ and $y$ we define the \emph{basin of $x$ (with respect to $y$)} $K_{xy}$ and the \emph{basin of $y$ (with respect to $x$) $K_{yx}$} as
    \begin{align*}
        K_{xy} \coloneqq & \{u\in N_{G}\;|\;d_{ux} \leq d_{uy}\} \\
        K_{yx} \coloneqq & \{u\in N_{G}\;|\;d_{uy} \leq d_{ux}\} \\
    \end{align*}
\end{definition}
That is, the basin of $x$ contains those vertices that are not farther from
$x$ than from $y$: see Figure~\ref{fig:closeness} for an example. Note that the vertices
that are equidistant from $x$ and $y$ are included in both basins, and the score of such vertices cannot
change in any geometric centrality when adding the edge $x\adj y$.

\begin{figure}
    \centering
    \begin{tikzpicture}
        \pgfdeclareimage{img}{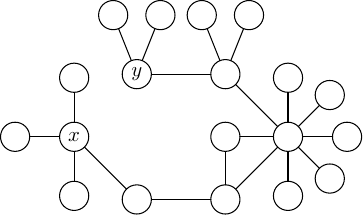};
        \node (img1) at (0,0) {\pgfuseimage{img}};
        \fill[blue!70,opacity=0.3] plot [smooth cycle, tension=0.5] coordinates {(-0.8,-0.2) (-1.8,2) (2.7,2) (3.5,-1) (3,-2) (1.4,-2.1) (1,-1) (0,-0.8)};
        \fill[red!70,opacity=0.3] plot [smooth cycle, tension=0.5] coordinates {(-0.2,-2) (-2,-2) (-3.6,-0.7) (-1.8,1) (-0.5,-0.8) (0.8,-0.1) (1.3,-1) (1,-2)};
    \end{tikzpicture}
    \caption{\label{fig:closeness}An undirected graph $G$, with $K_{xy}$ (the basin of x w.r.t.~y) shown in red and $K_{yx}$ (the basin of y w.r.t.~x) in blue.}
\end{figure}

Let us consider the following property:

\begin{definition}[Basin dominance]
    A centrality $c$ is said to be \emph{basin dominant} on an undirected graph $G$ iff for
    every pair of non-adjacent vertices $x$ and $y$ we have that
    \begin{gather}
        \begin{aligned}
            \label{def:basindom}
            c'(u)-c(u) \leq c'(x)-c(x) \qquad & \text{for every $u \in K_{xy}$, $u \neq x$}  \\
            c'(v)-c(v) \leq c'(y)-c(y) \qquad & \text{for every $v \in K_{yx}$, $v \neq y$.}
        \end{aligned}
    \end{gather}
    It is \emph{strictly basin dominant} iff the same conditions are satisfied, but inequalities (\ref{def:basindom}) hold with the $<$ sign.
\end{definition}
Intuitively, basin dominance means that the increase in score of $x$ and $y$ is at least as large as (or larger than, in the strict case) the increase in score of all other nodes in their respective basin.

The following theorems will be used throughout the paper:
\begin{theorem}
    \label{thm:generalbasin}
    If a centrality measure is strictly basin dominant on a graph then it is strictly rank semi-monotone on the same graph.
\end{theorem}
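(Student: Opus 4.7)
The plan is to prove the contrapositive-style disjunction: if strict rank semi-monotonicity at $x$ fails, then it must hold at $y$. Let me abbreviate $\Delta(u) = c'(u) - c(u)$, so that strict basin dominance reads $\Delta(u) < \Delta(x)$ for every $u \in K_{xy}\setminus\{x\}$ and $\Delta(u) < \Delta(y)$ for every $u \in K_{yx}\setminus\{y\}$. The key elementary observation is that any vertex $z \neq x,y$ lies in $K_{xy} \cup K_{yx}$, since either $d_{zx} \leq d_{zy}$ or $d_{zy} \leq d_{zx}$.

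First I would show the implication: \emph{if strict rank semi-monotonicity fails at $x$, then $\Delta(x) < \Delta(y)$.} Suppose the failure is witnessed by some $z \neq x,y$ with $c(z) \leq c(x)$ but $c'(z) \geq c'(x)$; subtracting gives $\Delta(z) \geq \Delta(x)$. If $z$ were in $K_{xy}$, strict basin dominance would give $\Delta(z) < \Delta(x)$, a contradiction. Hence $z \in K_{yx}$, and a second application of strict basin dominance yields $\Delta(z) < \Delta(y)$, so $\Delta(x) \leq \Delta(z) < \Delta(y)$.

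Next I would derive strict rank semi-monotonicity at $y$ from the inequality $\Delta(x) < \Delta(y)$. Take any $z \neq x,y$ with $c(z) \leq c(y)$; the goal is $c'(z) < c'(y)$, equivalently $c(z) + \Delta(z) < c(y) + \Delta(y)$. If $z \in K_{yx}$, strict basin dominance directly gives $\Delta(z) < \Delta(y)$, and combining with $c(z)\leq c(y)$ yields the strict inequality. Otherwise $z \in K_{xy}\setminus K_{yx}$, and strict basin dominance together with the step above gives $\Delta(z) < \Delta(x) < \Delta(y)$, from which the conclusion follows again. Putting the two halves together, either strict rank semi-monotonicity at $x$ holds outright, or its failure forces it at $y$.

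The only conceptual point to get right is that basin dominance by itself does not give monotonicity simultaneously at both endpoints: the argument must be genuinely disjunctive, using a hypothetical failure at one endpoint to force the score-increase comparison $\Delta(x) < \Delta(y)$ (or its mirror) that unlocks the other endpoint. Once this bookkeeping is identified, the remainder is a two-line chase of strict inequalities, with the partition $N_G \setminus \{x,y\} \subseteq K_{xy} \cup K_{yx}$ doing all the combinatorial work.
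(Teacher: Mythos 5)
Your proof is correct and rests on the same two ingredients as the paper's: the covering $N_{G}\setminus\{x,y\}\subseteq K_{xy}\cup K_{yx}$ and the basin-wise comparison of the score increments $\Delta$. The paper packages the argument as a single proof by contradiction with two simultaneous witnesses, deriving the cyclic chain $\Delta(y)>\Delta(v)\geq\Delta(x)>\Delta(u)\geq\Delta(y)$, whereas you unfold the same inequalities into a direct disjunction whose intermediate step---failure at $x$ forces $\Delta(x)<\Delta(y)$, which alone guarantees strict semi-monotonicity at $y$---has the minor added benefit of identifying the endpoint with the larger score increase as the one that enjoys monotonicity.
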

\begin{proof}
    Let $c$ be strictly basin dominant, and let us assume by contradiction that $c$ is not
    strictly rank semi-monotone. This implies that we should be able to find $u,v$ such that:
    \begin{equation}
        \label{eqn:absgeneral}
        \begin{cases}
            c(x) \geq c(v)   \\
            c(y) \geq c(u)   \\
            c'(v) \geq c'(x) \\
            c'(u) \geq c'(y).
        \end{cases}
    \end{equation}
    As a consequence of (\ref{eqn:absgeneral}), $c'(v)-c(v)\geq c'(x)-c(x)$ and $c'(u)-c(u)\geq c'(y)-c(y)$, which by the assumption of strict basin dominance imply $v \not\in K_{xy}$ and $u\not\in K_{yx}$. Therefore $v \in K_{yx}$ and $u \in K_{xy}$.
    But then again using strict basin dominance and (\ref{eqn:absgeneral}) we have $c'(y)-c(y)>c'(v)-c(v)\geq c'(x)-c(x)>c'(u)-c(u)\geq c'(y)-c(y)$, a contradiction.
    \qed\end{proof}

For the non-strict case, a similar result holds:
\begin{theorem}
    \label{thm:generalbasinw}
    If a centrality measure is basin dominant on a graph then it is rank semi-monotone on the same graph.
\end{theorem}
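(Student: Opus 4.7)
The plan is to adapt the contradiction argument used for Theorem~\ref{thm:generalbasin}, observing that even though basin dominance now only provides non-strict bounds, the \emph{failure} of rank semi-monotonicity always yields \emph{strict} gaps in the score increments---just enough to close a cyclic chain when combined with the non-strict dominance inequalities.

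First I would unfold the negation of rank semi-monotonicity as given in Definition~\ref{def:rsemi}. Failure of the first clause (at $x$) provides some $v\neq x,y$ satisfying either $c(v)<c(x)$ together with $c'(v)\geq c'(x)$, or $c(v)=c(x)$ together with $c'(v)>c'(x)$; in both subcases one gets the strict conclusion $c'(v)-c(v) > c'(x)-c(x)$. Symmetrically, failure of the second clause (at $y$) produces some $u\neq x,y$ with $c'(u)-c(u) > c'(y)-c(y)$. Thus, although basin dominance itself is only non-strict, the negation of rank semi-monotonicity hands us two strict inequalities between score increments.

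Next I would use basin dominance to localize $u$ and $v$. If $v\in K_{xy}$, basin dominance would give $c'(v)-c(v)\leq c'(x)-c(x)$, contradicting the strict inequality just derived; hence $v\notin K_{xy}$, and since $K_{xy}\cup K_{yx}=N_G$ (every vertex compares some way with $x$ and $y$) we conclude $v\in K_{yx}$. Symmetrically, $u\in K_{xy}$.

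Finally, applying basin dominance once more to $u\in K_{xy}$ and $v\in K_{yx}$ yields the cyclic chain
\[
c'(x)-c(x) \;<\; c'(v)-c(v) \;\leq\; c'(y)-c(y) \;<\; c'(u)-c(u) \;\leq\; c'(x)-c(x),
\]
a contradiction. The only delicate point in this plan is the first step: one must verify that the two possible ways in which rank semi-monotonicity can fail (a strict score reversal, or an equality turning into a strict gap) both collapse to the same strict inequality on the increments $c'(\cdot)-c(\cdot)$. Once this is in place, the rest of the argument is a verbatim re-run of the proof of Theorem~\ref{thm:generalbasin}, with the strict inequalities now supplied by the failure hypothesis instead of by strict basin dominance.
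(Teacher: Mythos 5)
Your proof is correct and follows essentially the same route as the paper's: negate rank semi-monotonicity, observe that each of the possible failure modes at $x$ and at $y$ yields the strict increment inequalities $c'(v)-c(v)>c'(x)-c(x)$ and $c'(u)-c(u)>c'(y)-c(y)$, use basin dominance to place $v\in K_{yx}$ and $u\in K_{xy}$, and close the cyclic chain of inequalities to reach a contradiction. The only cosmetic difference is that the paper enumerates the four failure combinations explicitly while you collapse them upfront into the two strict increment inequalities.
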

\begin{proof}
    Let $c$ be basin dominant, and let us assume by contradiction that $c$ is not rank semi-monotone. This implies that we should be able to find $u,v$ satisfying one
    of the following four sets of inequalities:
    \begin{equation}
        \label{eqn:basinloose}
        \begin{cases}
            c(x) > c(v)      \\
            c(y) > c(u)      \\
            c'(v) \geq c'(x) \\
            c'(u) \geq c'(y)
        \end{cases}
        \begin{cases}
            c(x) = c(v)   \\
            c(y) > c(u)   \\
            c'(v) > c'(x) \\
            c'(u) \geq c'(y)
        \end{cases}
        \begin{cases}
            c(x) > c(v)      \\
            c(y) = c(u)      \\
            c'(v) \geq c'(x) \\
            c'(u) > c'(y)
        \end{cases}
        \begin{cases}
            c(x) = c(v)   \\
            c(y) = c(u)   \\
            c'(v) > c'(x) \\
            c'(u) > c'(y).
        \end{cases}
    \end{equation}
    The proof is similar to the strict case.
    In all sets of inequalities we have $c'(v)-c(v)>c'(x)-c(x)$ and $c'(u)-c(u)>c'(y)-(y)$, which by the assumption of basin dominance imply $v \not\in K_{xy}$ and $u\not\in K_{yx}$. Therefore $v \in K_{yx}$ and $u \in K_{xy}$. Using basin dominance and (\ref{eqn:basinloose}), in all cases we have $c'(y)-c(y)\geq c'(v)-c(v)>c'(x)-c(x)\geq c'(u)-c(u)>$ $c'(y)-c(y)$, a contradiction.
    \qed\end{proof}

\section{Closeness Centrality}
Closeness centrality~\cite{bavelas,bavelas2} was shown to be score monotone but not rank monotone on connected undirected networks~\cite{boldi_furia_vigna_2023}. In the latter paper, it was left open the problem of whether closeness was (in our terminology) rank semi-monotone or not. In the rest of this section, we will solve this open problem by showing that closeness is in fact rank semi-monotone, but not in strict form.

Recall that, given an undirected connected graph $G=(N_{G},A_{G})$, the \emph{peripherality} of a vertex $u\in N_{G}$ is the sum of the distances from $u$ to all the other vertices of $G$:
\begin{equation*}
    p(u)=\sum_{v\in N_{G}}d_{uv}.
\end{equation*}
The \emph{closeness centrality} of $u$ is just the reciprocal of its peripherality:
\begin{equation*}
    c(u)=\frac{1}{p(u)}.
\end{equation*}
Consistently with the previous notation, we will use $p(u)$ for the peripherality of $u$ in $G$, and $p'(u)$ for the peripherality of $u$ in $G'$.

\begin{figure}
    \centering
    \begin{tikzpicture}[main/.style = {draw, circle}, node distance=2cm]
        \node[main] (y) {$y$};
        \node[main] (x) [right of=y] {$x$};
        \node[main] (z) [above of=y] {$z$};
        \node[main] (u) [right of=x] {$u$};

        \draw[decorate,decoration=snake] (z)--node[left]{$d_{yz}$}(y);
        \draw[decorate,decoration=snake] (z)--node[below]{$d_{xz}$}(x);
        \draw[decorate,decoration=snake] (z)--node[above]{$d_{uz}$}(u);
        \draw[decorate,decoration=snake] (y) to [bend right=30] node[below]{$d_{xy}$}(x);
        \draw[decorate,decoration=snake] (x)--node[below]{$d_{ux}$}(u);
        \draw[dashed] (x) -- (y);
    \end{tikzpicture}
    \caption{\label{fig:basin}Path labels represent the distance between the two endpoints.
        The dashed edge is the $x\adj y$ edge that we add to $G$, obtaining $G'$.}
\end{figure}
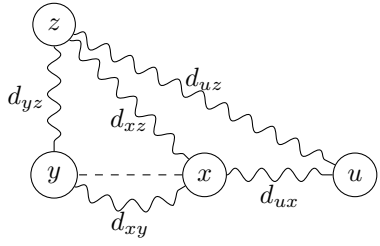

We have that:
\begin{lemma}\label{lemma:basin_closeness}
    Closeness centrality is basin dominant on connected undirected graphs.
\end{lemma}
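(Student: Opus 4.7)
The plan is to prove the first of the two defining inequalities; the second follows verbatim by swapping the roles of $x$ and $y$. Fix $u\in K_{xy}\setminus\{x\}$. Writing $c(w)=1/p(w)$ and setting $\Delta(w):=p(w)-p'(w)=\sum_z(d_{wz}-d'_{wz})\ge 0$, the target inequality reformulates as
\[
\frac{\Delta(u)}{p(u)\,p'(u)} \;\le\; \frac{\Delta(x)}{p(x)\,p'(x)}.
\]
So I need to bound the total distance saving of $u$ against that of $x$, and then account for the peripheralities appearing in the denominators.

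First I carry out a term-wise analysis of the savings $s_w(z):=d_{wz}-d'_{wz}$. The only way $d'_{uz}$ can strictly improve over $d_{uz}$ is by using the new edge, giving length either $d_{ux}+1+d_{yz}$ (entering via $x$) or $d_{uy}+1+d_{xz}$ (entering via $y$). Since $u\in K_{xy}$ means $d_{ux}\le d_{uy}$, the triangle inequality $d_{uz}\le d_{ux}+d_{xz}$ gives
\[
d_{uy}+1+d_{xz}\;\ge\;d_{ux}+1+d_{xz}\;\ge\; d_{uz}+1\;>\;d_{uz},
\]
so the second option never improves the distance. Hence every improvement takes the form $d'_{uz}=d_{ux}+1+d_{yz}$, and whenever $s_u(z)>0$ the same triangle inequality yields
\[
s_u(z)\;=\;d_{uz}-d_{ux}-1-d_{yz}\;\le\;d_{xz}-1-d_{yz}\;=\;s_x(z),
\]
the last equality because the positivity of $s_u(z)$ forces $d_{xz}>1+d_{yz}$ and hence $d'_{xz}=1+d_{yz}$. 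Summing over $z$ gives $\Delta(u)\le\Delta(x)$.

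The remaining and more delicate step is to promote this additive bound to the ratio comparison. The difficulty is that the peripheralities $p(u)$ and $p(x)$ are not directly comparable from the basin condition alone, which is precisely what distinguishes the closeness case from the easier harmonic situation (where no denominators appear). My plan is to refine the pointwise saving bound into a weighted pairing whose sum is exactly $\Delta(x)\,p(u)\,p'(u) - \Delta(u)\,p(x)\,p'(x)$: the triangle-inequality slack $d_{ux}+d_{xz}-d_{uz}\ge 0$ used above also controls the discrepancy between the $z$-summands of $p(u)$ and $p(x)$, and matching the contributions $z$ by $z$ should yield the required inequality. This bookkeeping, absent from the analogous harmonic argument, is the technical heart of the proof and the step I expect to demand the most care.
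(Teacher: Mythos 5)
Your term-by-term analysis is correct and coincides with the paper's: the inequality $d_{uz}-d'_{uz}\le d_{xz}-d'_{xz}$ for $u\in K_{xy}$ and $z\ne u,x$ is exactly the paper's key step, proved the same way via $d_{uz}\le d_{ux}+d_{xz}$ (your extra check that a shortcut for $u$ must enter the new edge at $x$ rather than at $y$ is a detail the paper leaves implicit). Summing gives $p(u)-p'(u)\le p(x)-p'(x)$. The genuine gap is the last step, which you only announce as a plan: converting this additive bound on peripheralities into the inequality $\frac{p(u)-p'(u)}{p(u)p'(u)}\le\frac{p(x)-p'(x)}{p(x)p'(x)}$ that basin dominance of $c=1/p$ actually requires. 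You correctly sense that this is the delicate point, but the ``weighted pairing'' you propose cannot exist, because the inequality is false. Take $u$ adjacent to $x$ and to $m$ pendant leaves $\ell_1,\dots,\ell_m$, and attach to $\ell_1$ a path $\ell_1\adj q_1\adj q_2\adj q_3\adj q_4=y$. Then $u\in K_{xy}$ (indeed $d_{ux}=1<5=d_{uy}$), and
\[
p(u)=m+15,\quad p'(u)=m+11,\quad p(x)=2m+19,\quad p'(x)=2m+10,
\]
so $c'(u)-c(u)=\tfrac{4}{(m+11)(m+15)}\approx 4/m^2$ while $c'(x)-c(x)=\tfrac{9}{(2m+10)(2m+19)}\approx 9/(4m^2)$; for every $m\ge 11$ the first strictly exceeds the second, and basin dominance fails at $u$.

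So the step you flagged as ``the technical heart'' is not merely hard, it is impossible; and the paper's own proof is no better off, since it jumps from the peripherality bound to the closeness statement with the phrase ``using the fact that closeness is reciprocal to peripherality''. What the summed inequality does prove is that $-p$ (negative peripherality) is basin dominant; since $-p$ induces the same ranking as $1/p$, Theorem~\ref{thm:generalbasinw} applied to $-p$ still yields rank semi-monotonicity of closeness, so the downstream theorem survives. If you want a correct statement, prove basin dominance for $-p$ (your first paragraph already does this) and then observe that rank semi-monotonicity is invariant under the order-preserving reparametrization $p\mapsto 1/p$, rather than trying to establish the ratio inequality itself.
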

\begin{proof}
    We first show that for every vertex $u \in K_{xy}$ and for every $z\neq u,x$:
    \begin{equation}
        \label{eqn:basin_closeness}
        d_{uz}-d'_{uz}\leq d_{xz}-d'_{xz}.
    \end{equation}
    Then, the result follows by adding both sides for all $z$ (as for $z=u$ or $z=x$
    the inequality trivializes) and using the fact that closeness is reciprocal to peripherality.
    To prove (\ref{eqn:basin_closeness}) we consider two cases (see Figure~\ref{fig:basin}):
    \begin{itemize}
        \item if $d'_{uz}=d_{uz}$ then the inequality trivially holds (because $d_{xz} \geq d'_{xz}$ always);
        \item if $d'_{uz}<d_{uz}$ then $d'_{uz}=d_{ux}+1+d_{yz}<d_{uz}$ and $d'_{xz}=d_{yz}+1$. Using $d_{uz}\leq d_{ux}+d_{xz}$ from the triangle inequality we have \begin{multline*}
                  d_{uz}-d'_{uz}=d_{uz}-(d_{ux}+1+d_{yz})\leq \\(d_{ux}+d_{xz})-(d_{ux}+1+d_{yz})=d_{xz}-(d_{yz}+1) = d_{xz}-d'_{xz}.\qed
              \end{multline*}
    \end{itemize}
    \end{proof}
This lemma is the undirected version of Lemma 2 in~\cite{rank_mon} (provided that in the latter you sum the inequality in the statement over all nodes $w$): it is interesting to observe that in the directed case the inequality holds \emph{for all $u$'s}, while here it holds only within the basin.

Applying Lemma~\ref{lemma:basin_closeness} and using Theorem~\ref{thm:generalbasinw}, we obtain that:
\begin{theorem}
    Closeness centrality is rank semi-monotone on connected undirected graphs.
\end{theorem}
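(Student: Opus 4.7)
The plan is very short: the theorem is essentially a one-line corollary of the two preceding results. Lemma~\ref{lemma:basin_closeness} has just shown that closeness centrality is basin dominant on every connected undirected graph, and Theorem~\ref{thm:generalbasinw} states that basin dominance implies rank semi-monotonicity for any centrality measure. So the proof I would write is simply: instantiate Theorem~\ref{thm:generalbasinw} with $c$ taken to be closeness centrality, using Lemma~\ref{lemma:basin_closeness} as the hypothesis, and the conclusion is exactly the statement of the theorem.

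There is essentially no obstacle at the level of this theorem, because all the real work has already been done. The conceptual content of Lemma~\ref{lemma:basin_closeness} is the vertex-by-vertex distance inequality~(\ref{eqn:basin_closeness}), which, once summed over all $z$, expresses that peripheralities in $K_{xy}$ decrease by no more than the peripherality at $x$ does; since $c=1/p$, this is precisely the basin-dominance inequality $c'(u)-c(u)\leq c'(x)-c(x)$ for $u\in K_{xy}\setminus\{x\}$ (and symmetrically on $K_{yx}$). Theorem~\ref{thm:generalbasinw} then takes care of translating this into the rank statement by the contradiction argument already given there, so no further computation is needed.

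The one remark I would include is that the argument only yields the non-strict form of rank semi-monotonicity, not the strict form from Definition~\ref{def:srsemi}. Indeed, inequality~(\ref{eqn:basin_closeness}) is often tight (for instance when $d'_{uz}=d_{uz}$ and $d_{xz}=d'_{xz}$), so Lemma~\ref{lemma:basin_closeness} cannot be strengthened to strict basin dominance, and Theorem~\ref{thm:generalbasin} is therefore not available. This is consistent with the paper's earlier announcement that closeness is rank semi-monotone \emph{but not in strict form}.
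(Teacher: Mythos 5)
Your proposal is correct and matches the paper exactly: the paper also obtains this theorem as an immediate corollary of Lemma~\ref{lemma:basin_closeness} (basin dominance of closeness) combined with Theorem~\ref{thm:generalbasinw}, with no additional argument. Your closing remark about why only the non-strict form follows is likewise consistent with the paper's subsequent counterexample to strict rank semi-monotonicity.
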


Interestingly, and regardless of their initial score, we can always tell which of the two endpoints of the edge $x\adj y$ will have smaller peripherality (i.e., higher centrality) in $G'$.
In fact:
\begin{lemma}
    \label{lemma:perafter}
    The following property holds:
    \[
        p'(x)-p'(y)=|K_{yx}|-|K_{xy}|.
    \]
\end{lemma}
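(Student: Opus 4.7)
The plan is to compute $p'(x) - p'(y)$ vertex-by-vertex: write
\[
    p'(x) - p'(y) = \sum_{z \in N_G} \bigl(d'_{xz} - d'_{yz}\bigr),
\]
and show that the summand is $-1$ when $z$ lies strictly in $K_{xy}$ (i.e.\ in $K_{xy}\setminus K_{yx}$), $+1$ when $z$ lies strictly in $K_{yx}$, and $0$ when $z$ is equidistant from $x$ and $y$. Summing then yields $|K_{yx}\setminus K_{xy}| - |K_{xy}\setminus K_{yx}|$, which, since the contribution of $K_{xy}\cap K_{yx}$ cancels, equals $|K_{yx}| - |K_{xy}|$.

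The key observation for each term is that after adding the edge $x\adj y$ the new distances satisfy $d'_{xz} = \min(d_{xz}, 1 + d_{yz})$ and $d'_{yz} = \min(d_{yz}, 1 + d_{xz})$. Using the integrality of graph distances, if $d_{xz} < d_{yz}$ then $d_{xz} + 1 \le d_{yz}$, which forces $d'_{xz} = d_{xz}$ and $d'_{yz} = 1 + d_{xz}$, giving $d'_{xz} - d'_{yz} = -1$. The symmetric case $d_{yz} < d_{xz}$ gives $+1$, and when $d_{xz} = d_{yz}$ both shortcut candidates $1 + d_{xz}$ and $1 + d_{yz}$ strictly exceed the original distances, so $d'_{xz} = d_{xz} = d_{yz} = d'_{yz}$ and the contribution is $0$. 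The endpoints $z = x$ and $z = y$ fit this classification too: $x\in K_{xy}\setminus K_{yx}$ contributes $0 - 1 = -1$, and $y\in K_{yx}\setminus K_{xy}$ contributes $1 - 0 = +1$.

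I do not expect any real obstacle here; the only subtlety is the integrality step that turns the strict inequality $d_{xz} < d_{yz}$ into $d_{xz} + 1 \le d_{yz}$, which is essential for pinning down the new distance through the edge $x\adj y$. Once that is in place, the counting is immediate and the identity $|K_{yx}| - |K_{xy}| = |K_{yx}\setminus K_{xy}| - |K_{xy}\setminus K_{yx}|$ finishes the proof.
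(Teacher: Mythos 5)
Your proof is correct and follows essentially the same route as the paper's: both rest on the observation that $d'_{xz}=\min(d_{xz},1+d_{yz})$ (and symmetrically for $y$), so that each vertex's contribution is determined by its basin membership. The only difference is organizational --- you compute the difference $d'_{xz}-d'_{yz}$ termwise as $-1$, $+1$, or $0$, while the paper writes $p'(x)$ and $p'(y)$ separately as sums over the basins and then subtracts; the integrality step you highlight is indeed the point that makes the classification work.
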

\begin{proof}
    We can write the peripherality of $x$ and $y$ in $G'$ as
    \begin{align*}
        p'(x) & =\sum_{u\in K_{xy}}d_{ux} + \sum_{u\in K_{yx}}(1+d_{uy}) - \sum_{u\in K_{xy} \cap K_{yx}}(1+d_{uy})  \\
        p'(y) & =\sum_{u\in K_{xy}}(1+d_{ux}) + \sum_{u\in K_{yx}}d_{uy} - \sum_{u\in K_{xy} \cap K_{yx}}(1+d_{ux}).
    \end{align*}
    Note that for each $u \in K_{xy} \cap K_{yx}$ we have $d_{ux}=d_{uy}$.
    Computing the difference between the two expressions gives the result.
    \qed\end{proof}

It is not hard to build a graph $G$ where $x$ has a smaller basin than $y$ but a greater score:  Lemma~\ref{lemma:perafter} tells us that $y$ becomes more central than $x$ in $G'$, due to having a greater basin.

\begin{figure}
    \centering
    \begin{tikzpicture}[main/.style = {draw, circle}, node distance=1cm, scale=.9]
        \node[main] (y1) at (0,4) {};
        \node[main] (y2) at (0,3) {};
        \node (yd) at (0,2) {$\vdots$};
        \node[main] (yk) at (0,1) {};
        \node[main] (y) at (2,2.5) {$y$};
        \node[main] (yru) at (3.5,3.5) {};
        \node[main] (yrd) at (3.5,1.5) {};
        \node[main] (x) at (5,2.5) {$x$};
        \node[main] (xru) at (6.5,3.5) {};
        \node[main] (xr) at (6.5,2.5) {};
        \node[main] (u) at (6.5,1.5) {$u$};
        \node[main] (w) at (8,2.5) {$w$};
        \node[main] (w1) at (10,4) {};
        \node[main] (w2) at (10,3) {};
        \node (wd) at (10,2) {$\vdots$};
        \node[main] (wk) at (10,1) {};

        \draw (y1) -- (y);
        \draw (y2) -- (y);
        \draw (yk) -- (y);
        \draw (y) -- (yru) -- (x);
        \draw (y) -- (yrd) -- (x);
        \draw (yru) -- (xru);
        \draw (x) -- (xru) -- (w);
        \draw (x) -- (xr) -- (w);
        \draw (x) -- (u) -- (w);
        \draw (w) -- (w1);
        \draw (w) -- (w2);
        \draw (w) -- (wk);

        \draw[dashed] (x) -- (y);

        \draw [decoration={brace,amplitude=0.5em},decorate,black]
        let \p1=(-0.5,4.1), \p2=(-0.5,0.9) in
        ({max(\x1,\x2)}, {\y2}) -- node[left=0.6em] {$k$} ({max(\x1,\x2)}, {\y1});

        \draw [decoration={brace,amplitude=0.5em},decorate,black]
        let \p1=(10.5,0.9), \p2=(10.5,4.1) in
        ({max(\x1,\x2)}, {\y2}) -- node[right=0.6em] {$k+4$} ({max(\x1,\x2)}, {\y1});

    \end{tikzpicture}
    \caption{\label{fig:closeness_strict_counterexample}A counterexample to strict rank semi-monotonicity for closeness centrality.
        For all $k\geq10$, $u$ and $x$ have the same score before and after the addition of the edge $x\adj y$. Moreover, $u$ has the same score of $y$ (or smaller) before the addition, but a higher score after the addition, breaking strict rank semi-monotonicity.}
\end{figure}
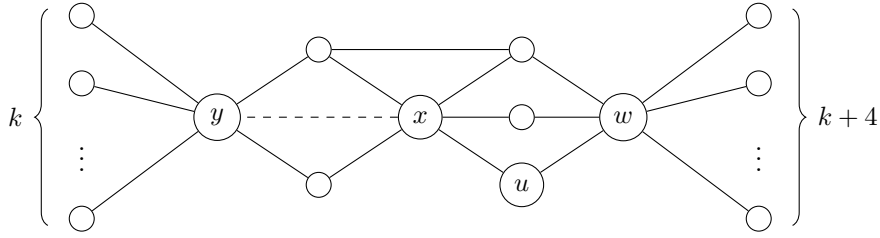

We conclude this section by showing that:
\begin{theorem}
    Closeness centrality is not strictly rank semi-monotone on (an infinite family of) connected undirected graphs.
\end{theorem}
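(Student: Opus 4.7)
The plan is to verify the infinite counterexample family displayed in Figure~\ref{fig:closeness_strict_counterexample}, parameterised by $k \geq 10$. The strategy is purely computational: compute the peripheralities of $x$, $y$ and the distinguished vertex $u$ in both $G$ and $G'$, then exhibit $u$ as a simultaneous witness that disqualifies both candidate endpoints from strict rank semi-monotonicity.

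First, I would compute the three peripheralities in $G$ by exploiting the symmetries between the three middle vertices on the $x$-to-$w$ side (one of which is $u$) and the two middle vertices on the $y$-to-$x$ side, and by tabulating the distances from each of $x, y, u$ to the remaining vertices. The crucial structural feature is the shortcut edge connecting the topmost middle vertex on the $y$-side to the topmost middle vertex on the $x$-side, which pins several distances from $y$ at value $3$ rather than $4$. Summing yields closed-form linear expressions in $k$; the facts to extract are $p(x) = p(u)$ and $p(u) \geq p(y)$, so that $c(x) = c(u)$ and $c(u) \leq c(y)$, the second inequality being exactly what the threshold $k \geq 10$ is designed to guarantee.

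Next I would recompute in $G'$, observing that only distances whose $G$-shortest path traverses the central region are affected, and that, thanks to the aforementioned shortcut edge, $y$'s distance to $w$ and to its leaves is already optimal in $G$ and hence unchanged. The decisive outcomes are $p'(x) = p'(u)$ and $p'(u) < p'(y)$, equivalently $c'(x) = c'(u)$ and $c'(u) > c'(y)$.

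Finally, the conclusion is immediate from Definition~\ref{def:srsemi} applied with $z = u$. Strict rank semi-monotonicity at $x$ would demand that $c(u) \leq c(x)$ imply $c'(u) < c'(x)$, but we have $c(u) = c(x)$ and $c'(u) = c'(x)$, so the required strict inequality fails. Strict rank semi-monotonicity at $y$ would demand that $c(u) \leq c(y)$ imply $c'(u) < c'(y)$, but instead $c'(u) > c'(y)$. Since $u$ defeats both candidate endpoints simultaneously, closeness is not strictly rank semi-monotone on this infinite family. The main obstacle is the bookkeeping of distances in the central ``bowtie'' of the construction: the shortcut edge between the two topmost middle vertices is what keeps the balance between the two sides after edge addition, and without it the numerical match required to make both failures occur on the \emph{same} witness $u$ would not hold.
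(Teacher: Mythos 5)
Your proposal follows essentially the same route as the paper: it verifies the family of Figure~\ref{fig:closeness_strict_counterexample}, computes the peripheralities of $u$, $x$, $y$ in $G$ and $G'$ as linear functions of $k$, and uses $u$ as the single witness giving $p(x)=p(u)$, $p'(x)=p'(u)$ (killing strictness at $x$) and $p(y)\leq p(u)$ for $k\geq 10$ together with $p'(y)>p'(u)$ (killing semi-monotonicity at $y$). The logic relating peripheralities to closeness scores and to Definition~\ref{def:srsemi} is correct, so the proposal matches the paper's argument.
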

\begin{proof}
    Consider the graphs in Figure~\ref{fig:closeness_strict_counterexample}, where $u\in K_{xy}$.
    This is an infinite family of graphs with a parameter $k$ which controls the sizes of the two stars around vertices $w$ and $y$.
    Computing the peripheralities of $u$, $x$ and $y$ before and after the addition of $x\adj y$, we obtain
    \begin{align*}
        p(u) & = 2\cdot(k+4) + 4\cdot k + 13  \qquad\qquad
        p'(u) = 2\cdot(k+4) + 3\cdot k + 12                    \\
        p(x) & = 3\cdot(k+4) + 3\cdot k + 9   \qquad\qquad\:\:
        p'(x) = 3\cdot(k+4) + 2\cdot k + 8                     \\
        p(y) & = 4\cdot(k+4) + k + 15    \qquad\qquad\quad\:
        p'(y) = 4\cdot(k+4) + k + 12.
    \end{align*}
    For all $k\geq10$, we have that
    \[
        p(x)=p(u),\quad p'(x)=p'(u),\quad p(y)\leq p(u),\quad p'(y)>p'(u),
    \]
    showing that closeness is not semi-monotone at $y$ (because $y$ used to be at least as central as $u$, but it is less central after the addition of the edge) and not strictly rank semi-monotone at $x$ (it is always as central as $x$, before and after adding the edge).
    \qed\end{proof}

\section{Harmonic Centrality}
Harmonic centrality~\cite{beauchamp} solves the issue of unreachable vertices in closeness centrality.
In particular, if we assume $\infty^{-1}=0$, we can define it as
\begin{equation*}
    h(u) = \sum_{v\in N_{G}\setminus\{u\}}\frac{1}{d_{uv}},
\end{equation*}
so that unreachable vertices have a null impact on the summation and, thus, on the final centrality score of the node.
Being a geometric measure, it is trivially score monotone but not rank monotone, as shown in~\cite{boldi_furia_vigna_2023},
where the same counterexample disproving rank monotonicity for closeness centrality also shows that harmonic centrality fails at satisfying this axiom.

\begin{lemma}\label{lemma:basin_harmonic}
    Harmonic centrality is strictly basin dominant on connected undirected graphs.
\end{lemma}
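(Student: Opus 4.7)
The plan is to mimic the proof of Lemma~\ref{lemma:basin_closeness}, reducing the required score-difference inequality to a termwise reciprocal inequality on distances, and then to pinpoint a single index that forces strict dominance.

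Fix $u \in K_{xy}$ with $u \neq x$. Because $d_{ux}=d_{xu}$ and $d'_{ux}=d'_{xu}$, the $z=x$ contribution to $h'(u)-h(u)$ cancels against the $z=u$ contribution to $h'(x)-h(x)$, so it suffices to prove
\[
\frac{1}{d'_{uz}} - \frac{1}{d_{uz}} \leq \frac{1}{d'_{xz}} - \frac{1}{d_{xz}}
\]
for every $z \neq u,x$, with strict inequality for at least one such $z$. I would split into the same two cases as the closeness proof. If $d'_{uz}=d_{uz}$ the left side is $0$ and the right side is $\geq 0$ since $d'_{xz}\leq d_{xz}$. Otherwise $d'_{uz}<d_{uz}$, a shortest $u$-to-$z$ path in $G'$ uses the new edge, and one gets $d'_{uz}=d_{ux}+1+d_{yz}$ and $d'_{xz}=1+d_{yz}$ (the latter because $d_{ux}+1+d_{yz}<d_{uz}\leq d_{ux}+d_{xz}$ forces $1+d_{yz}<d_{xz}$).

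The main computation—the point where the proof genuinely differs from the closeness case—is the algebraic bound in this second subcase. Writing $a=d_{ux}$, $b=d_{yz}$, $c=d_{uz}$, $e=d_{xz}$, the desired reciprocal inequality rearranges to
\[
\frac{a}{(a+1+b)(1+b)} \geq \frac{c-e}{c\,e},
\]
and since $c-e\leq a$ by the triangle inequality $c\leq a+e$, it suffices to establish $(a+1+b)(1+b)<c\,e$. Here is where I expect the only real obstacle: from integrality of distances together with $c>a+1+b$ one gets $c\geq a+b+2$, and combined with $e\geq c-a$ this gives $e\geq b+2$, whence $c\,e\geq(a+b+2)(b+2)=(a+b+1)(b+1)+(a+b+1)+(b+2)>(1+b)(a+1+b)$, strictly. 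So the termwise inequality is in fact strict whenever the new edge shortens $d_{uz}$.

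It remains to exhibit one strict term. The natural choice is $z=y$, which is admissible because $u\in K_{xy}\setminus\{x\}$ implies $u\neq y$ (otherwise $d_{yx}\leq d_{yy}=0$ would force $x=y$). If the new edge shortens $d_{uy}$, the strict Case~2 above applies; otherwise $d'_{uy}=d_{uy}$, so the left side of the termwise inequality vanishes while the right side equals $1-1/d_{xy}>0$, since $x$ and $y$ are non-adjacent in $G$ and hence $d_{xy}\geq 2$. Summing the termwise inequalities then yields $h'(u)-h(u)<h'(x)-h(x)$, which is the first half of strict basin dominance; the corresponding statement for $v\in K_{yx}\setminus\{y\}$ follows by exchanging the roles of $x$ and $y$.
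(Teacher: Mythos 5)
Your proof is correct and follows essentially the same route as the paper's: the same termwise inequality $\frac{1}{d'_{uz}}-\frac{1}{d_{uz}}\leq\frac{1}{d'_{xz}}-\frac{1}{d_{xz}}$, the same two cases according to whether the new edge shortens $d_{uz}$, and the same choice of $z=y$ as the index forcing strictness. The only (immaterial) difference is that in the second case you obtain the key product bound $(d_{ux}+1+d_{yz})(d_{yz}+1)<d_{uz}d_{xz}$ by clearing denominators and invoking integrality of distances, whereas the paper reads it off directly from $d'_{uz}<d_{uz}$ and $d'_{xz}<d_{xz}$.
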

\begin{proof}
    We first show that for every vertex $u \in K_{xy}$ and for every $z\neq u,x$:
    \begin{equation}
        \label{eqn:harmineq}
        \frac{1}{d'_{uz}} - \frac{1}{d_{uz}} \leq \frac{1}{d'_{xz}} - \frac{1}{d_{xz}}.
    \end{equation}
    Then, the result follows by adding both sides for all $z$. Note that the unique term in $h(u)$ and $h'(u)$ (i.e., when $z = x$) is equal to the unique term in $h(x)$ and $h'(x)$  (i.e., when $z = u$), and they are both equal to 0. Also, we remark that (\ref{eqn:harmineq}) holds with a strict inequality at least in one case, i.e., when $z = y$, since $d'_{xy} < d_{xy}$ always.

    We consider two cases:
    \begin{itemize}
        \item if $d'_{uz}=d_{uz}$ then the inequality trivially holds (because $d'_{xz} \leq d_{xz}$ always);
        \item if $d'_{uz}<d_{uz}$ then $d'_{uz}=d_{ux}+1+d_{yz}<d_{uz}$ and $d'_{xz}=d_{yz}+1$. Using $d_{uz}\leq d_{ux}+d_{xz}$ from the triangle inequality we have
              \begin{align*}
                  \frac{1}{d'_{uz}} - \frac{1}{d_{uz}}-\Bigg(\frac{1}{d'_{xz}} - \frac{1}{d_{xz}}\Bigg) & =
                  \frac{1}{d_{ux}+1+d_{yz}} - \frac{1}{d_{uz}} - \Bigg(\frac{1}{d_{yz}+1} - \frac{1}{d_{xz}}\Bigg)                                                                            \\&= \frac{(d_{yz}+1)-(d_{ux}+1+d_{yz})}{(d_{ux}+1+d_{yz})(d_{yz}+1)} + \frac{d_{uz}-d_{xz}}{d_{uz}d_{xz}} \\
                                                                                                        & = - \frac{d_{ux}}{(d_{ux}+1+d_{yz})(d_{yz}+1)} + \frac{d_{uz}-d_{xz}}{d_{uz}d_{xz}} \\&< - \frac{d_{ux}}{d_{uz}d_{xz}} + \frac{(d_{ux}+d_{xz})-d_{xz}}{d_{uz}d_{xz}} = 0,
              \end{align*}
              which proves the inequality.\qed
    \end{itemize}
    \end{proof}

Using Lemma~\ref{lemma:basin_harmonic} and Theorem~\ref{thm:generalbasin}, we obtain that:
\begin{theorem}
    Harmonic centrality is strictly rank semi-monotone on connected undirected graphs.
\end{theorem}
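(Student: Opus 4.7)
The plan is to derive the statement as an immediate corollary of the two results proved just before it, with essentially no new work required. Specifically, Lemma~\ref{lemma:basin_harmonic} says harmonic centrality is strictly basin dominant on every connected undirected graph, and Theorem~\ref{thm:generalbasin} says strict basin dominance implies strict rank semi-monotonicity; chaining these two facts gives exactly the conclusion.

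Concretely, I would fix an arbitrary connected undirected graph $G$ and a pair of non-adjacent vertices $x,y$, invoke Lemma~\ref{lemma:basin_harmonic} to obtain $h'(u) - h(u) < h'(x) - h(x)$ for every $u \in K_{xy}\setminus\{x\}$ and the symmetric inequality for $K_{yx}\setminus\{y\}$, and then apply Theorem~\ref{thm:generalbasin} verbatim to conclude that $h$ is strictly rank semi-monotone on $G$. Since $G$ was arbitrary, the property holds on the whole class of connected undirected graphs.

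The only mildly delicate point is making sure the strictness claimed by Lemma~\ref{lemma:basin_harmonic} really survives summation into a strict aggregate inequality on $h'(u) - h(u)$ versus $h'(x) - h(x)$. The per-vertex inequality~(\ref{eqn:harmineq}) is often an equality, namely whenever $d'_{uz} = d_{uz}$, so a naive sum only yields $\leq$. The proof of the lemma handles this by singling out $z = y$: there we always have $d'_{xy} = 1 < d_{xy}$, which gives a genuinely strict contribution, and that single strict term is what promotes the sum to a strict inequality and hence to \emph{strict} basin dominance. With this observation in place, the hypothesis of Theorem~\ref{thm:generalbasin} is met and the result follows; no further obstacle remains, since the contradiction argument in that theorem does the remaining work.
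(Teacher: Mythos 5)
Your proposal is correct and matches the paper exactly: the theorem is stated as an immediate consequence of Lemma~\ref{lemma:basin_harmonic} (strict basin dominance of harmonic centrality) combined with Theorem~\ref{thm:generalbasin}. Your observation about the $z=y$ term being the source of strictness is also precisely how the paper's proof of the lemma justifies that the summed inequality is strict, so nothing is missing.
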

The stronger result we can give for harmonic centrality should be compared to the fact that on
strongly connected graphs harmonic centrality is strictly rank monotone,
whereas closeness centrality is just rank monotone~\cite{boldi_furia_vigna_2023}.

\section{Betweenness Centrality}
Betweenness centrality~\cite{anthonisse,freeman}
focuses not only on the length of shortest paths, but also on how many of them involve a given node, trying
to estimate the amount of flow passing through nodes in a network. Note that betweenness is not a geometric measure.

Formally, if we call $\sigma_{vw}$ the number of shortest paths between two vertices $v$ and $w$ and $\sigma_{vw}(u)$ the number of such paths passing through $u$, then we can define the betweenness centrality of a vertex $u\in N_{G}$ as
\begin{equation*}
    b(u)=\sum_{\substack{i,j\neq u,\\\sigma_{ij}>0}}\frac{\sigma_{ij}(u)}{\sigma_{ij}}.
\end{equation*}

In the following, we let $N_G(u)$ denote the set of neighbors of $u$ in $G$, and $G[u]$ the subgraph of $G$ induced by $N_G(u)$ (sometimes called the \emph{ego network} of $u$).
As in the previous section, we denote with $\sigma$ and $\sigma'$ the number of shortest paths before and after the addition of an edge $x\adj y$, and with $b$ and $b'$
the betweenness centrality before and after the addition of the edge.

We know from~\cite{boldi_furia_vigna_2023} that this centrality measure is neither rank nor score monotone.
Nonetheless, we can show that the betweenness centrality of two vertices can never decrease after we link them with a new edge.
In fact:
\begin{lemma}
    \label{lemma:betposxy}
    The following properties hold:
    \iffalse
        \begin{align*}
            \frac{\sigma'_{ij}(x)}{\sigma'_{ij}} - \frac{\sigma_{ij}(x)}{\sigma_{ij}}  \geq 0 \qquad & \text{for all $i,j \neq x$} \\
            \frac{\sigma'_{ij}(y)}{\sigma'_{ij}} - \frac{\sigma_{ij}(y)}{\sigma_{ij}}  \geq 0 \qquad & \text{for all $i,j \neq y$}
        \end{align*}
        As a consequence
        \[
            b'(x) \geq b(x) \qquad\text{ and }\qquad b'(y)\geq b(y).
        \]
    \else
        \[
            \frac{\sigma'_{ij}(x)}{\sigma'_{ij}} - \frac{\sigma_{ij}(x)}{\sigma_{ij}}  \geq 0 \qquad  \text{for all $i,j \neq x$}.
        \]
        As a consequence, $b'(x) \geq b(x)$.
    \fi
\end{lemma}
\begin{proof}
    For all $i,j\in N_{G}$ such that $i,j\neq x$, let us call $p\hh_{\hh x}$ ($p\hh_{\hh \nott{x}}$, respectively) the number of shortest paths between $i$ and $j$ passing (not passing, resp.) through $x$.
    We have to show that, for each such pair $i,j\neq x$, the following holds:
    \begin{equation}\label{eqn:betposxy}
        \frac{\sigma'_{ij}(x)}{\sigma'_{ij}} - \frac{\sigma_{ij}(x)}{\sigma_{ij}}=
        \frac{p\primehh_{\hh x}}{p\primehh_{\hh x}+p\primehh_{\hh \nott{x}}} - \frac{p\hh_{\hh x}}{p\hh_{\hh x}+p\hh_{\hh \nott{x}}} \geq 0.
    \end{equation}
    Summing over all $i,j\neq x$ proves the second part of the statement.

    To show that (\ref{eqn:betposxy}) is indeed true we consider two cases:
    \begin{itemize}
        \item $d'_{ij}<d_{ij}$, meaning that in $G'$ all the shortest paths $i{\sim}j$ pass through the edge $x\adj y$ (in particular through $x$). Thus, we obtain:
              \[
                  1-\frac{p\hh_{\hh x}}{{p\hh_{\hh x}+p\hh_{\hh \nott{x}}}}\geq 0,
              \]
              which is clearly true, since the second term of the left side of the inequality is a value between $0$ and $1$.
        \item $d'_{ij}=d_{ij}$, which implies that $p\primehh_{\hh \nott{x}}=p\hh_{\hh \nott{x}}$ and $p\primehh_{\hh x}\geq p\hh_{\hh x}$.
              We express $p\primehh_{\hh x}$ as $p\hh_{\hh x}+\alpha$ with $\alpha\geq 0$, obtaining:
              \begin{align*}
                  \frac{p\hh_{\hh x}+\alpha}{p\hh_{\hh x}+\alpha+p\hh_{\hh \nott{x}}} - \frac{p\hh_{\hh x}}{p\hh_{\hh x}+p\hh_{\hh \nott{x}}} & =
                  \frac{p^{2}_{x}+p\hh_{\hh x}p\hh_{\hh \nott{x}}+\alpha p\hh_{\hh x}+\alpha p\hh_{\hh \nott{x}} - (p^{2}_{x}+\alpha p\hh_{\hh x}+p\hh_{\hh x}p\hh_{\hh \nott{x}})}{(p\hh_{\hh x}+\alpha+p\hh_{\hh \nott{x}})(p\hh_{\hh x}+p\hh_{\hh \nott{x}})} =       \\
                                                                                                                                              & = \frac{\alpha p\hh_{\hh \nott{x}}}{(p\hh_{\hh x}+\alpha+p\hh_{\hh \nott{x}})(p\hh_{\hh x}+p\hh_{\hh \nott{x}})} \geq 0,
              \end{align*}
              which is again true, concluding the proof.\qed
    \end{itemize}
\end{proof}

Incidentally, we can always tell if the betweenness centrality of a vertex is zero without actually computing it. In fact,
\begin{lemma}\label{lemma:betwenness_zero}
    Let $G$ be a connected undirected graph and $u\in N_{G}$.
    Then:
    \begin{equation*}
        b(u)=0\iff G[u] \text{ is a clique}.
    \end{equation*}
\end{lemma}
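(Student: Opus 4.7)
The plan is to prove both directions by looking at what it means for a shortest path between two vertices $i,j\neq u$ to pass through $u$: any such path must enter $u$ from some neighbor $v\in N_G(u)$ and leave it via another neighbor $w\in N_G(u)$, traversing the length-$2$ subpath $v\adj u\adj w$.

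For the $(\Leftarrow)$ direction, I would argue by contradiction. Suppose $G[u]$ is a clique and some shortest path $P$ between vertices $i,j\neq u$ passes through $u$. Extracting the neighbors $v,w\in N_G(u)$ immediately preceding and following $u$ on $P$, the clique assumption guarantees the edge $v\adj w$ is present in $G$, so replacing the subpath $v\adj u\adj w$ (length $2$) by the edge $v\adj w$ (length $1$) yields a strictly shorter $i$-$j$ walk, contradicting the shortness of $P$. Hence $\sigma_{ij}(u)=0$ for every pair $i,j\neq u$ with $\sigma_{ij}>0$, so $b(u)=0$.

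For the $(\Rightarrow)$ direction I would also use the contrapositive: assume $G[u]$ is not a clique, so there exist two neighbors $v,w\in N_G(u)$ with $v\not\sim w$. Then $d_{vw}\geq 2$, while the path $v\adj u\adj w$ has length $2$, so $d_{vw}=2$ and this path is a shortest $v$-$w$ path through $u$. Consequently $\sigma_{vw}(u)\geq 1$, and since every term $\sigma_{ij}(u)/\sigma_{ij}$ in the definition of betweenness is nonnegative, we get
\[
b(u)\geq \frac{\sigma_{vw}(u)}{\sigma_{vw}}\geq \frac{1}{\sigma_{vw}}>0,
\]
completing the contrapositive.

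There isn't really a serious obstacle here: the whole statement boils down to the local observation that a shortest path through $u$ is witnessed exactly by a pair of non-adjacent neighbors of $u$. The only small care needed is to make sure that in the $(\Leftarrow)$ direction the $v,w$ extracted from $P$ are actually distinct (they are, because $P$ is a simple shortest path and $u$ has degree at least $2$ along $P$) and that in the $(\Rightarrow)$ direction one remembers that the sum defining $b(u)$ has no negative contributions, so a single positive term suffices.
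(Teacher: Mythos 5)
Your proof is correct and follows essentially the same route as the paper's: the $(\Leftarrow)$ direction via the shortcut $v\adj u\adj w \to v\adj w$, and the $(\Rightarrow)$ direction via a non-adjacent pair of neighbors of $u$ at distance $2$ witnessing $\sigma_{vw}(u)\geq 1$. Your added remarks (distinctness of $v,w$, and the fact that a single positive term suffices because no summand is negative) are minor points the paper leaves implicit.
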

\begin{proof}
    If $b(u)=0$ no shortest paths are passing through $u$: but then any two neighbors of $u$ must be adjacent (or otherwise they would have distance $2$, and the path through $u$ has length $2$).
    Conversely, suppose that $G[u]$ is a clique and let $i,j \neq u$. A path from $i$ to $j$ cannot involve $u$, otherwise it would touch two neighbors of $u$, say $i',j'$, and we might shorten it by skipping $u$ and taking the $i'\adj j'$ edge instead of $i'\adj u\adj j'$.
    \qed\end{proof}

\begin{figure}
    \centering
    \begin{tikzpicture}[main/.style = {draw, circle}, node distance=2cm, scale=0.9]
        \foreach \n in {1,...,4} {
                \node[main] (v\n) at (45+\n*90:1.3) {};
            }
        \foreach \i in {1,...,4} {
                \foreach \j in {\i,...,4} {
                        \draw (v\i) -- (v\j);
                    }
            }
        \node[main] (x) at (270:2.6) {$x$};
        \node[main] (y) at (360:2.6) {$y$};
        \node[main] (u) at (180:2.6) {$u$};
        \foreach \i in {1,...,4} {
                \draw (x) -- (v\i);
                \draw (y) -- (v\i);
                \draw (u) -- (v\i);
            }
        \draw[dashed] (x) -- (y);
    \end{tikzpicture}
    \caption{\label{fig:betweenness_counterexample}Simple counterexample for score semi-monotonicity and strict rank semi-monotonicity for betweenness centrality.
        The dashed edge is the $x\adj y$ edge that we add to $G$, obtaining $G'$.
        The betweenness score of vertices $x$, $y$ and $u$ is $0$ both in $G$ and $G'$.}
\end{figure}
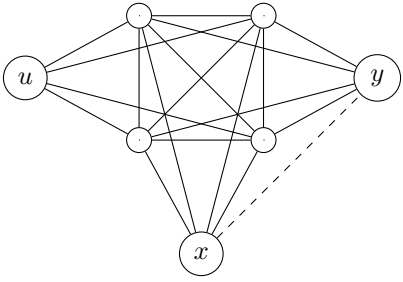

As a consequence, and differently from geometric measures, we can show that
\begin{theorem}\label{theorem:betweenness_score}
    Betweenness centrality is not score semi-monotone on (an infinite family of) connected undirected graphs.
\end{theorem}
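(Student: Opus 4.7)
The plan is to exhibit an explicit counterexample based on the construction in Figure~\ref{fig:betweenness_counterexample} and then observe that the same argument scales to an infinite family by varying one parameter. The key insight is that Lemma~\ref{lemma:betposxy} already guarantees $b'(x)\geq b(x)$ and $b'(y)\geq b(y)$, so disproving score semi-monotonicity only requires a graph in which both of these inequalities are actually equalities. The cleanest way to force equality is to pin both sides at zero, both before and after the addition of $x\adj y$, and for that task Lemma~\ref{lemma:betwenness_zero} is tailor-made: it reduces the whole problem to checking that the ego networks $G[x]$, $G[y]$, $G'[x]$, $G'[y]$ are all cliques.

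Carrying this out on the graph of Figure~\ref{fig:betweenness_counterexample}, I would first observe that the neighbors of $x$ in $G$ are exactly the four central vertices, which form a clique, so $G[x]$ is a clique and Lemma~\ref{lemma:betwenness_zero} gives $b(x)=0$; by symmetry $b(y)=0$. Next, I would pass to $G'=G+(x\adj y)$: the neighbors of $x$ in $G'$ are the four central vertices together with $y$, and since $y$ is adjacent to every one of the central vertices (which are already mutually adjacent), $G'[x]$ is again a clique, now isomorphic to $K_5$. Lemma~\ref{lemma:betwenness_zero} then yields $b'(x)=0$, and the same reasoning gives $b'(y)=0$. Consequently $b'(x)=b(x)$ and $b'(y)=b(y)$, violating score semi-monotonicity at the pair $\{x,y\}$.

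To upgrade this to an infinite family, I would replace the central $K_4$ by $K_n$ for arbitrary $n\geq 2$, keeping $x$, $y$ (and $u$, which will be reused in the next section) adjacent to every vertex of the central clique. The verification above depends only on the cliqueness of $G[x]$, $G'[x]$, $G[y]$, $G'[y]$, and this is preserved for every $n$, producing infinitely many counterexamples. There is no real obstacle beyond choosing the right gadget: once the ``clique neighborhood'' idea is in hand, Lemma~\ref{lemma:betwenness_zero} removes the need for any shortest-path counting; the only point worth checking is that $x$ and $y$ are non-adjacent in $G$ so that the semi-monotonicity question genuinely applies to the pair.
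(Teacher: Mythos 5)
Your proposal is correct and follows essentially the same route as the paper: both rest on Lemma~\ref{lemma:betwenness_zero}, verify that the ego networks of $x$ and $y$ are cliques before and after adding $x\adj y$ (using that $N_G(x)=N_G(y)$ so the new neighbor $y$ of $x$ is adjacent to all of $x$'s old neighbors), and obtain the infinite family by growing the central clique. No substantive differences.
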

\begin{proof}
    Consider a graph $G$ such that $G[x]$ and $G[y]$ are cliques and moreover, $N_{G}(x)=N_{G}(y)$, that is, $x$ and $y$ have the same neighborhood in $G$ (see Figure~\ref{fig:betweenness_counterexample} for an example).
    Then, by Lemma~\ref{lemma:betwenness_zero} we know that $b(x)=b(y)=0$.
    It is easy to observe that $G'[x]$ and $G'[y]$ are still cliques, hence we can use the same lemma and say that $b'(x)=b'(y)=0$, meaning that the addition of the $x\adj y$ leaves the score of both the endpoints unchanged.
    \qed\end{proof}

Moreover, we can say that
\begin{theorem}\label{theorem:betwenness_strict_rank}
    Betweenness centrality is not strictly rank semi-monotone on (an infinite family of) connected undirected graphs.
\end{theorem}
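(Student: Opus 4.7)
The plan is to reuse (and slightly repurpose) the counterexample from Theorem~\ref{theorem:betweenness_score}, namely the family of graphs illustrated in Figure~\ref{fig:betweenness_counterexample}, but parameterized so that the central clique has $k \geq 3$ vertices, each adjacent to $x$, $y$ and a third vertex $u$. This gives an infinite family. The whole argument then rests on Lemma~\ref{lemma:betwenness_zero} applied to three vertices simultaneously ($x$, $y$ and $u$), both before and after the addition of the edge.

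First I would verify that in $G$ the ego networks $G[x]$, $G[y]$ and $G[u]$ are all cliques: by construction $N_G(x) = N_G(y) = N_G(u)$ is exactly the central clique, so each induced ego network is the same clique. By Lemma~\ref{lemma:betwenness_zero} this gives $b(x) = b(y) = b(u) = 0$. Next I would check that the ego networks remain cliques in $G'$: for $u$ nothing changes because $u$ is not an endpoint of the new edge, and for $x$ the induced ego network $G'[x]$ consists of the original central clique plus $y$, which is adjacent to every vertex of the central clique because $G[y]$ was a clique; symmetrically for $G'[y]$. Applying Lemma~\ref{lemma:betwenness_zero} again, $b'(x) = b'(y) = b'(u) = 0$.

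With these four-way equalities in hand, strict rank semi-monotonicity fails at both endpoints using the single witness $u$. Indeed, strict rank semi-monotonicity at $x$ requires that $b(u) \leq b(x)$ imply $b'(u) < b'(x)$; but $b(u) = b(x) = 0$ and $b'(u) = b'(x) = 0$, so the strict inequality in the conclusion is violated. The same computation with $y$ in place of $x$ shows that strict rank semi-monotonicity at $y$ fails as well. Since neither of the two statements in Definition~\ref{def:srsemi} holds, betweenness centrality is not strictly rank semi-monotone on this infinite family.

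There is no real obstacle in the argument: the whole content is already encapsulated by Lemma~\ref{lemma:betwenness_zero}, which turns strict rank semi-monotonicity at a zero-betweenness endpoint into a requirement that some other zero-betweenness vertex \emph{strictly} increase its score, a requirement that fails as soon as we can produce a third vertex whose ego network stays a clique across the edge insertion. The only thing to be careful about is that the graph is a genuine counterexample for both endpoints simultaneously (we use the same $u$), so that the disjunction in Definition~\ref{def:srsemi} is ruled out on both sides.
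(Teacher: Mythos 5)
Your proposal is correct and follows essentially the same route as the paper: the same family of graphs from Figure~\ref{fig:betweenness_counterexample}, with Lemma~\ref{lemma:betwenness_zero} applied to $x$, $y$ and $u$ before and after the edge insertion to conclude that all six betweenness values are zero. You merely spell out two steps the paper leaves implicit (that the ego networks remain cliques in $G'$, and the explicit check against Definition~\ref{def:srsemi}), which is fine.
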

\begin{proof}
    Consider a graph $G$ with three vertices $x$, $y$ and $u$ adjacent to a clique, but with $x$, $y$, and $u$ not adjacent to each other, as in Figure~\ref{fig:betweenness_counterexample}.
    Then, by Lemma~\ref{lemma:betwenness_zero} we know that $b(x)=b(y)=b(u)=0$; if we add the edge $x \adj y$ to $G$, obtaining $G'$, we also have
    $b'(x)=b'(y)=b'(u)=0$ by the same lemma.
    \qed\end{proof}

We are now going to show that, somehow unexpectedly, betweenness centrality is rank semi-monotone on connected undirected graphs.
In fact, we show that it enjoys the same dominance property of closeness centrality, in spite of being a non-geometric measure:

\begin{lemma}\label{lemma:basin_betweenness}
    Betweenness centrality is basin dominant on connected undirected graphs.
\end{lemma}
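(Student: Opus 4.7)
The plan is to parallel Lemmas~\ref{lemma:basin_closeness} and~\ref{lemma:basin_harmonic}: expand the increment
\[
 b'(v) - b(v) \;=\; \sum_{(i,j)} \Delta_{ij}(v), \qquad \Delta_{ij}(v) := \frac{\sigma'_{ij}(v)}{\sigma'_{ij}} - \frac{\sigma_{ij}(v)}{\sigma_{ij}},
\]
and prove, for $u \in K_{xy}$ with $u \neq x$, that $\sum_{i,j\neq u} \Delta_{ij}(u) \leq \sum_{i,j\neq x} \Delta_{ij}(x)$ (the $K_{yx}$ case is symmetric). I split the pair indices by whether they contain $x$, $u$, both, or neither. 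Pairs containing both $x$ and $u$ disappear from each sum; pairs containing $u$ but not $x$ appear only on the right and, by Lemma~\ref{lemma:betposxy}, each contributes non-negatively. The two substantive tasks are: (a) a pointwise inequality $\Delta_{ij}(u) \leq \Delta_{ij}(x)$ for pairs with $i,j \notin \{u,x\}$; and (b) controlling the boundary terms of the form $\Delta_{xj}(u)$ with $j \neq u,x$ (and their mirrors $\Delta_{jx}(u)$, equal by undirected symmetry), which appear only on the left.

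For task (a), I would condition on whether $d'_{ij} = d_{ij}$ or $d'_{ij} < d_{ij}$. In the first case, write $\sigma'_{ij} = \sigma_{ij} + \alpha$ where $\alpha \geq 0$ counts the new shortest paths through the edge $x\adj y$ (each of which contains $x$), and $\sigma'_{ij}(u) = \sigma_{ij}(u) + \beta$ with $0 \leq \beta \leq \alpha$. A short algebraic manipulation reduces the desired inequality to
\[
 \beta\,\sigma_{ij} + \alpha\,\sigma_{ij}(x) \;\leq\; \alpha\,\sigma_{ij} + \alpha\,\sigma_{ij}(u),
\]
which I would verify by decomposing each of the $\beta$ new shortest paths through $u$ according to whether $u$ lies on the $x$-side or the $y$-side of the edge $x\adj y$, and bounding these sub-counts in terms of $\sigma_{ij}(u)$ and $\sigma_{ij}(x)$ using the basin hypothesis $d_{ux} \leq d_{uy}$. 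When $d'_{ij} < d_{ij}$, every $G'$-shortest $i$-$j$ path uses the new edge, so $\sigma'_{ij}(x) = \sigma'_{ij}$ and $\Delta_{ij}(x) = 1 - \sigma_{ij}(x)/\sigma_{ij}$; the inequality rearranges to a similar counting statement, again exploiting $u \in K_{xy}$.

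For task (b), I would pair each boundary term $\Delta_{xj}(u)$ on the left against $\Delta_{uj}(x)$ on the right. A useful structural fact is that for $u \neq x$ the conditions $\sigma_{xj}(u) > 0$ and $\sigma_{uj}(x) > 0$ are mutually exclusive (both would force $d_{xj} = d_{xu}+d_{uj}$ and $d_{uj} = d_{ux}+d_{xj}$, giving $d_{ux}=0$), so for each $j$ at most one of the two boundary terms is "populated" in $G$, which simplifies the comparison.

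The main obstacle, I expect, is precisely step (b): the pairs $(x,j)$ on the left have no direct pointwise counterpart on the right, and a naive splitting does not suffice. Making the pairing argument rigorous---showing that the basin condition $d_{ux} \leq d_{uy}$ forces the increment of $u$'s dependency on shortest paths emanating from $x$ to be dominated by $x$'s increment on shortest paths emanating from $u$---is the delicate combinatorial heart of the proof, with no analog in the closeness or harmonic cases.
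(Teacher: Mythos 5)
Your overall architecture matches the paper's: expand into per-pair increments, absorb the pairs containing $u$ but not $x$ via Lemma~\ref{lemma:betposxy}, prove a pointwise bound $\Delta_{ij}(u)\leq\Delta_{ij}(x)$ for pairs avoiding both, and treat the boundary pairs $(x,j)$ separately. Your task~(a) is consistent with the paper's handling of the common pairs: the reduction to $\beta\sigma_{ij}+\alpha\sigma_{ij}(x)\leq\alpha\sigma_{ij}+\alpha\sigma_{ij}(u)$ is correct when $d'_{ij}=d_{ij}$, and the counting it calls for is exactly the multiplicativity $\sigma_{ij}(v)=\sigma_{iv}\sigma_{vj}$ of Brandes that the paper invokes. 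It is, however, only a sketch: you still need to rule out new shortest paths in which $u$ appears \emph{after} $x$ (the paper does this by restricting to $\Delta_u(i,j)>0$ and to $u$ strictly closer to $x$ than to $y$), and you need to carry out the product-decomposition computation.

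The genuine gap is task~(b), which you yourself flag as unresolved. The pairing of $\Delta_{xj}(u)$ against $\Delta_{uj}(x)$ is not the right move, and the mutual-exclusivity observation does not by itself control the boundary terms. The missing idea is that each boundary term is \emph{individually non-positive}, so no pairing is needed at all. Indeed, for $j\in K_{xy}$ the edge $x\adj y$ creates no new shortest $j$--$x$ path, so $\Delta_u(j,x)=0$. For $j\notin K_{xy}$ every new shortest $j$--$x$ path arrives at $x$ through $y$, and such a path cannot contain $u$: if it did, then $u$ would lie on a shortest $j$--$y$ path, giving $d'_{jx}\leq d_{jx}\leq d_{ju}+d_{ux}\leq d_{ju}+d_{uy}=d_{jy}<d_{jy}+1$, which contradicts the new path (of length $d_{jy}+1$) being shortest in $G'$. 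Hence either $d'_{jx}=d_{jx}$, in which case $\sigma'_{jx}(u)=\sigma_{jx}(u)$ while $\sigma'_{jx}\geq\sigma_{jx}$, or $d'_{jx}<d_{jx}$, in which case $\sigma'_{jx}(u)=0$; in both cases $\Delta_u(j,x)\leq 0$ and these terms can simply be dropped from the left-hand sum, leaving the terms $\Delta_{uj}(x)$ on the right free to be bounded below by zero as you already do. Without this observation (or a completed version of your pairing argument) the proof does not close.
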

\begin{proof}
    Let us call $\Delta_{z}=b'(z)-b(z)$ the score difference for a vertex $z\in N_{G}$, and for every pair of nodes $i,j$ (with $i\neq j$) let also
    \[
        \Delta_{z}(i,j)=\frac{\sigma'_{ij}(z)}{\sigma'_{ij}}-\frac{\sigma_{ij}(z)}{\sigma_{ij}}.
    \]
    Obviously
    \[
        \Delta_{z}=\sum_{i,j \neq z} \Delta_{z}(i,j).
    \]
    We want to show that $\Delta_u\leq\Delta_x$ for every $u \in K_{xy}$.
        The two summations happen on a different set of pairs of indices, and we treat the common and non-common pairs separately.

        The easiest case is that of pairs $i,j$ that appear in the summation of $\Delta_x$ but not in the summation
        of $\Delta_u$, because we know from Lemma~\ref{lemma:betposxy} that those summands are non-negative.

        Then we consider the pairs $i,j$ that appear in the summation of $\Delta_u$ but not in the summation of $\Delta_x$, that is,
        those where either $i$ or $j$ are equal to $x$; without loss of generality let us assume $j=x$. We want to show that
        in this case, instead, we have
        \[
            \Delta_u(i,x) = \frac{\sigma'_{ix}(u)}{\sigma'_{ix}} - \frac{\sigma_{ix}(u)}{\sigma_{ix}}\leq 0.
        \]
        When $i\in K_{xy}$, the new $x\adj y$ edge does not create any new shortest path between $i$ and $x$, so $\Delta_u(i,x) =0$.
        Conversely, when $i\not\in K_{xy}$
        new shortest paths cannot pass through $u$ (remember that $u \in K_{xy}$); thus,
        \begin{itemize}
            \item if $d_{ix} = d'_{ix}$ then $\sigma_{ix}\leq \sigma'_{ix}$ and
                  $\sigma_{ix}(u)=\sigma'_{ix}(u)$;
            \item if $d_{ix} > d'_{ix}$ then $\sigma'_{ix}(u) = 0$.
        \end{itemize}
        We are now left with the pairs $i,j$ that appear in both summations, that is, $i,j\neq u$ and $i,j\neq x$.
        In this case, we want to prove a term-by-term bound, that is:
        \begin{equation}
            \label{eqn:ineqij}
            \Delta_{u}(i,j)=\frac{\sigma'_{ij}(u)}{\sigma'_{ij}} - \frac{\sigma_{ij}(u)}{\sigma_{ij}} \leq \frac{\sigma'_{ij}(x)}{\sigma'_{ij}} - \frac{\sigma_{ij}(x)}{\sigma_{ij}}=\Delta_{x}(i,j).
        \end{equation}

        Note that if $i$ and $j$ belong to the same basin the edge $x\adj y$ does not create any new shortest
        path between $i$ and $j$, so (\ref{eqn:ineqij}) holds because both sides are zero;
        this includes the case in which $i$ and $j$ are in the intersection of the basins,
        so we can
        assume, without loss of generality, that $i\in K_{xy}\setminus K_{yx}$ and $j\in K_{yx}\setminus K_{xy}$. Note also that if $u$ is
        equidistant from $x$ and $y$, the edge $x \adj y$ will never create any shortest path between $i$
        and $j$ that passes through $u$. Hence, we will restrict our attention to the case where $u$
        is strictly closer to $x$ than to $y$, that is, $u\in K_{xy} \setminus K_{yx}$.

        The case $\Delta_u(i,j) \leq 0$ is trivial because of Lemma~\ref{lemma:betposxy}; we now analyze the case $\Delta_u(i,j) > 0$.


        Since  $i,u\in K_{xy} \setminus K_{yx}$ and $j\in K_{yx}\setminus K_{xy}$,
        all
        shortest paths in $G$ and $G'$ from $i$ to $j$ passing through $x$ and $y$
        must pass through $x$ before $y$, and all
        shortest path in $G$ and $G'$ from $i$ to $j$ passing through $u$ and $x$
        must pass through $u$ before $x$.

        The first statement is trivial (just look at the basins of $i$ and $j$). For the second
        statement, we can prove it by contradiction: if $u$ is after $x$ in a shortest path from $i$ to $j$, then $u$ is necessarily between $x$ and $y$. This means that $d'_{ij}<d_{ij}$ since we might shorten the path $x{\sim}u{\sim}y$ by taking $x\adj y$. Hence $\sigma'_{ij}(u)=0$ and thus
        $\Delta_u(i,j) < 0$---a contradiction.

        Let us now call $p\hh_{\hh ux}$ the number of shortest paths $i{\sim}j$ in $G$ passing through $u$ and then through $x$, $p\hh_{\hh \nott{u}x}$ the ones passing through $x$ but not $u$, $p\hh_{\hh u\nott{x}}$ the ones passing through $u$ but not $x$ and, finally, $p\hh_{\hh \nott{u}\nott{x}}$ the ones passing through neither of them. The same notations are used for $G'$, but we use $p'$ instead of $p$.

        With these notations, we can write the single terms appearing in~(\ref{eqn:ineqij}) as follows:
        \begin{align*}
            \sigma_{ij}(u)  & = p\hh_{\hh ux}+p\hh_{\hh u\nott{x}}  \qquad
            \sigma_{ij}(x)   = p\hh_{\hh ux}+p\hh_{\hh \nott{u}x}                                                                       \\
            \sigma'_{ij}(u) & = p\primehh_{\hh ux}+p\primehh_{\hh u\nott{x}}  \qquad
            \sigma'_{ij}(x)   = p' _{ux}+p\primehh_{\hh \nott{u}x}                                                                      \\
            \sigma_{ij}     & = p\hh_{\hh ux}+p\hh_{\hh \nott{u}x}+p\hh_{\hh u\nott{x}}+p\hh_{\hh \nott{u}\nott{x}}                     \\
            \sigma'_{ij}    & = p\primehh_{\hh ux}+p\primehh_{\hh \nott{u}x}+p\primehh_{\hh u\nott{x}}+p\primehh_{\hh \nott{u}\nott{x}}
        \end{align*}
        which makes us able to rewrite~(\ref{eqn:ineqij}) as
        \[
            \frac{p\primehh_{\hh ux}+p\primehh_{\hh u\nott{x}}}{\sigma'_{ij}}
            - \frac{p\hh_{\hh ux}+p\hh_{\hh u\nott{x}}}{\sigma_{ij}} \leq
            \frac{p\primehh_{\hh ux}+p\primehh_{\hh \nott{u}x}}{\sigma'_{ij}}
            - \frac{p\hh_{\hh ux}+p\hh_{\hh \nott{u}x}}{\sigma_{ij}}
        \]
        which is equivalent to
        \begin{align*}
            \frac{p\primehh_{\hh u\nott{x}} - p\primehh_{\hh \nott{u}x}}{\sigma'_{ij}} & \leq \frac{p\hh_{\hh u\nott{x}} - p\hh_{\hh \nott{u}x}}{\sigma_{ij}}.
        \end{align*}
        Hence,
        \begin{align*}
            \frac{
            (
            p\primehh_{\hh u\nott{x}}p\hh_{\hh ux} + p\primehh_{\hh u\nott{x}}p\hh_{\hh \nott{u}x} + p\primehh_{\hh u\nott{x}}p\hh_{\hh u\nott{x}} + p\primehh_{\hh u\nott{x}}p\hh_{\hh \nott{u}\nott{x}}
            ) - (
            p\primehh_{\hh \nott{u}x}p\hh_{\hh ux} + p\primehh_{\hh \nott{u}x}p\hh_{\hh \nott{u}x} + p\primehh_{\hh \nott{u}x}p\hh_{\hh u\nott{x}} + p\primehh_{\hh \nott{u}x}p\hh_{\hh \nott{u}\nott{x}}
            )}
            {\sigma'_{ij}\sigma_{ij}} \leq \\ \\
            \leq \frac{
            (
            p\primehh_{\hh ux}p\hh_{\hh u\nott{x}} + p\primehh_{\hh \nott{u}x}p\hh_{\hh u\nott{x}} + p\primehh_{\hh u\nott{x}}p\hh_{\hh u\nott{x}} + p\primehh_{\hh \nott{u}\nott{x}}p\hh_{\hh u\nott{x}}
            ) - (
            p\primehh_{\hh ux}p\hh_{\hh \nott{u}x} + p\primehh_{\hh \nott{u}x}p\hh_{\hh \nott{u}x} + p\primehh_{\hh u\nott{x}}p\hh_{\hh \nott{u}x} + p\primehh_{\hh \nott{u}\nott{x}}p\hh_{\hh \nott{u}x}
            )}
            {{\sigma'_{ij}\sigma_{ij}}}.
        \end{align*}
        As a consequence,
        \begin{align*}
            p\primehh_{\hh u\nott{x}}p\hh_{\hh ux} + p\primehh_{\hh u\nott{x}}p\hh_{\hh \nott{u}x} + p\primehh_{\hh u\nott{x}}p\hh_{\hh u\nott{x}} + p\primehh_{\hh u\nott{x}}p\hh_{\hh \nott{u}\nott{x}} + p\primehh_{\hh ux}p\hh_{\hh \nott{u}x} + p\primehh_{\hh \nott{u}x}p\hh_{\hh \nott{u}x} + p\primehh_{\hh u\nott{x}}p\hh_{\hh \nott{u}x} + p\primehh_{\hh \nott{u}\nott{x}}p\hh_{\hh \nott{u}x} \leq \\
            \leq p\primehh_{\hh ux}p\hh_{\hh u\nott{x}} + p\primehh_{\hh \nott{u}x}p\hh_{\hh u\nott{x}} + p\primehh_{\hh u\nott{x}}p\hh_{\hh u\nott{x}} + p\primehh_{\hh \nott{u}\nott{x}}p\hh_{\hh u\nott{x}} + p\primehh_{\hh \nott{u}x}p\hh_{\hh ux} + p\primehh_{\hh \nott{u}x}p\hh_{\hh \nott{u}x} + p\primehh_{\hh \nott{u}x}p\hh_{\hh u\nott{x}} + p\primehh_{\hh \nott{u}x}p\hh_{\hh \nott{u}\nott{x}}.
        \end{align*}
        Finally, deleting from both sides the two common summands $p\primehh_{\hh u\nott{x}}p\hh_{\hh u\nott{x}}$ and $p\primehh_{\hh \nott{u}x}p\hh_{\hh \nott{u}x}$, we obtain
        \begin{align*}
            p\primehh_{\hh u\nott{x}}p\hh_{\hh ux} + p\primehh_{\hh u\nott{x}}p\hh_{\hh \nott{u}x} + p\primehh_{\hh u\nott{x}}p\hh_{\hh \nott{u}\nott{x}} + p\primehh_{\hh ux}p\hh_{\hh \nott{u}x} + p\primehh_{\hh u\nott{x}}p\hh_{\hh \nott{u}x} + p\primehh_{\hh \nott{u}\nott{x}}p\hh_{\hh \nott{u}x} \leq \\
            \leq p\primehh_{\hh ux}p\hh_{\hh u\nott{x}} + p\primehh_{\hh \nott{u}x}p\hh_{\hh u\nott{x}} + p\primehh_{\hh \nott{u}\nott{x}}p\hh_{\hh u\nott{x}} + p\primehh_{\hh \nott{u}x}p\hh_{\hh ux} + p\primehh_{\hh \nott{u}x}p\hh_{\hh u\nott{x}} + p\primehh_{\hh \nott{u}x}p\hh_{\hh \nott{u}\nott{x}}.
        \end{align*}

        We now prove that:

    \begin{quoting}
		\begin{fact}
			\label{fact:pp}
			$p\primehh_{\hh ux}p\hh_{\hh \nott{u}x}=p\primehh_{\hh \nott{u}x}p\hh_{\hh ux}$.
		\end{fact}
		\begin{proof}[ of Fact~\ref{fact:pp}]
			\noindent{\emph{Case 1.}} If $p\primehh_{\hh ux}= 0$ (resp. $p\primehh_{\hh \nott{u}x} = 0$) then $p\hh_{\hh ux} = 0$ (resp. $p\hh_{\hh \nott{u}x} = 0$), since the edge $x\adj y$ can only shorten the paths passing through $x$; hence if $p\primehh_{\hh ux}= 0$ or $p\primehh_{\hh \nott{u}x} = 0$, the statement holds. 
			
			\noindent{\emph{Case 2.}} So let us assume that $p\primehh_{\hh ux} \neq 0$ and $p\primehh_{\hh \nott{u}x} \neq 0$. We want to show that in this case $p\hh_{\hh ux} \neq 0 \iff p\hh_{\hh \nott{u}x} \neq 0$. Let us denote with $d_{ix}^{\nott{u}}$ the length of the shortest path between $i$ and $x$ in $G$ that does not pass through $u$ (which is finite, because of the assumption $p\primehh_{\hh \nott{u}x} \neq 0$). We have:
			\begin{align*}
				p\primehh_{\hh ux}          \neq 0 &\iff d'_{ij} = d_{iu} + d_{ux} + d'_{xj},\\
				p\primehh_{\hh \nott{u}x}   \neq 0 &\iff d'_{ij} = d_{ix}^{\nott{u}}  + d'_{xj},\\
				p\hh_{\hh ux}               \neq 0 &\iff d_{ij}  = d_{iu} + d_{ux} + d_{xj},\\
				p\hh_{\hh \nott{u}x}        \neq 0 &\iff d_{ij}  = d_{ix}^{\nott{u}}  + d_{xj}.
			\end{align*}
			From the first two equations we have that $d_{iu} + d_{ux} = d_{ix}^{\nott{u}}$. Substituting in the third and fourth equation, we have that $p\hh_{\hh ux} \neq 0 \iff p\hh_{\hh \nott{u}x} \neq 0$.
			Hence, if $p\hh_{\hh ux} = 0$ also $p\hh_{\hh \nott{u}x} = 0$, and viceversa, making the equality in the statement true.

			\noindent{\emph{Case 3.}} 
			We are left with the case $p\primehh_{\hh ux} \neq 0$, $p\primehh_{\hh \nott{u}x} \neq 0$, $p\hh_{\hh ux} \neq 0$ and $p\hh_{\hh \nott{u}x} \neq 0$. Let $s\hh_{\hh u}$ ($s\hh_{\hh \nott{u}}$, respectively) be the number of shortest paths
			in $G$ from $i$ to $x$ that pass through $u$ (do not pass through $u$, resp.): this number remains the same in $G'$ because $i$ and $u$ both belong to the basin of $x$; let also $t$ and $t'$ be the number of shortest paths
			from $x$ to $j$ in $G$ and $G'$, respectively. Then,
			\[
			p\primehh_{\hh ux}p\hh_{\hh \nott{u}x} = s\hh_{\hh u} t' s\hh_{\hh \nott{u}} t =  s\hh_{\hh \nott{u}} t's\hh_{\hh u} t = p\primehh_{\hh \nott{u}x}p\hh_{\hh ux}.
			\]
			This concludes the proof of Fact~\ref{fact:pp}.\qed
		\end{proof}
    \end{quoting}

        Using Fact~\ref{fact:pp}, we can delete $p\primehh_{\hh ux}p\hh_{\hh \nott{u}x}$ from the left-hand side and $p\primehh_{\hh \nott{u}x}p\hh_{\hh ux}$ from the right-hand side, obtaining:
        \begin{align*}
            p\primehh_{\hh u\nott{x}}p\hh_{\hh ux} + p\primehh_{\hh u\nott{x}}p\hh_{\hh \nott{u}x} + p\primehh_{\hh u\nott{x}}p\hh_{\hh \nott{u}\nott{x}} + p\primehh_{\hh u\nott{x}}p\hh_{\hh \nott{u}x} + p\primehh_{\hh \nott{u}\nott{x}}p\hh_{\hh \nott{u}x} \leq \\
            \leq p\primehh_{\hh ux}p\hh_{\hh u\nott{x}} + p\primehh_{\hh \nott{u}x}p\hh_{\hh u\nott{x}} + p\primehh_{\hh \nott{u}\nott{x}}p\hh_{\hh u\nott{x}} + p\primehh_{\hh \nott{u}x}p\hh_{\hh u\nott{x}} + p\primehh_{\hh \nott{u}x}p\hh_{\hh \nott{u}\nott{x}}.
        \end{align*}

        We now distinguish two cases:
        \begin{itemize}
            \item If $d'_{ij}<d_{ij}$, all the shortest paths $i{\sim}j$ in $G'$ pass through $x$, meaning that $p\primehh_{\hh u\nott{x}}=p\primehh_{\hh \nott{u}\nott{x}}=0$.
                  Thus, we obtain
                  \[
                      0 \leq p\primehh_{\hh ux}p\hh_{\hh u\nott{x}} + p\primehh_{\hh \nott{u}x}p\hh_{\hh u\nott{x}} + p\primehh_{\hh \nott{u}x}p\hh_{\hh u\nott{x}} + p\primehh_{\hh \nott{u}x}p\hh_{\hh \nott{u}\nott{x}},
                  \]
                  which is always true since every term on the right is non-negative.
            \item If $d_{ij}=d'_{ij}$, all the existing shortest paths $i{\sim}j$ remain, and more can be created, meaning that
                  \begin{equation}\label{eqn:secondcasebet}
                      p\primehh_{\hh u\nott{x}}=p\hh_{\hh u\nott{x}},\quad
                      p\primehh_{\hh \nott{u}\nott{x}}=p\hh_{\hh \nott{u}\nott{x}},\quad
                      p\primehh_{\hh ux}\geq p\hh_{\hh ux},\quad
                      p\primehh_{\hh \nott{u}x}\geq p\hh_{\hh \nott{u}x}.
                  \end{equation}
                  Substituting the first two equalities, we must prove that:
                  \begin{align*}
                      p\hh_{\hh u\nott{x}}p\hh_{\hh ux} + p\hh_{\hh u\nott{x}}p\hh_{\hh \nott{u}x} + p\hh_{\hh u\nott{x}}p\hh_{\hh \nott{u}\nott{x}} + p\hh_{\hh u\nott{x}}p\hh_{\hh \nott{u}x} + p\hh_{\hh \nott{u}\nott{x}}p\hh_{\hh \nott{u}x} \leq \\
                      \leq p\primehh_{\hh ux}p\hh_{\hh u\nott{x}} + p\primehh_{\hh \nott{u}x}p\hh_{\hh u\nott{x}} + p\hh_{\hh \nott{u}\nott{x}}p\hh_{\hh u\nott{x}} + p\primehh_{\hh \nott{u}x}p\hh_{\hh u\nott{x}} + p\primehh_{\hh \nott{u}x}p\hh_{\hh \nott{u}\nott{x}},
                  \end{align*}
                  which is equivalent to
                  \[
                      0 \leq p\hh_{\hh u\nott{x}}(p\primehh_{\hh ux}-p\hh_{\hh ux}) + 2p\hh_{\hh u\nott{x}}(p\primehh_{\hh \nott{u}x}-p\hh_{\hh \nott{u}x}) + p\hh_{\hh \nott{u}\nott{x}}(p\primehh_{\hh \nott{u}x}-p\hh_{\hh \nott{u}x}),
                  \]
                  which is again true, since all the terms in parenthesis are non-negative, because of (\ref{eqn:secondcasebet}).\qed
        \end{itemize}
    \end{proof}

Hence, applying Theorem~\ref{thm:generalbasinw} with Lemma~\ref{lemma:basin_betweenness} we have that:
\begin{theorem}\label{theorem:betwenness_rank}
    Betweenness centrality is rank semi-monotone on connected undirected graphs.
\end{theorem}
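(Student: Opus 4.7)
The plan is to obtain this theorem as an immediate corollary of the two results already developed in this section. Lemma~\ref{lemma:basin_betweenness} has established that betweenness centrality is basin dominant on every connected undirected graph, and Theorem~\ref{thm:generalbasinw} asserts, in full generality, that basin dominance implies rank semi-monotonicity. Chaining the two yields the conclusion, so the argument reduces to a single invocation.

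Spelling it out: fix a connected undirected graph $G$ and a pair of non-adjacent vertices $x, y$, and let $G'$ be obtained from $G$ by adding the edge $x \adj y$. Lemma~\ref{lemma:basin_betweenness} supplies the inequalities
\begin{equation*}
    b'(u) - b(u) \leq b'(x) - b(x) \quad \text{for every } u \in K_{xy},\ u \neq x,
\end{equation*}
together with the symmetric statement on $K_{yx}$. These are exactly the hypotheses of Theorem~\ref{thm:generalbasinw} applied to $c = b$, and its conclusion is rank semi-monotonicity of $b$ at $x$ or at $y$ on $G$. Since $G$ was an arbitrary connected undirected graph, betweenness is rank semi-monotone on the whole class.

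There is no obstacle at this stage: the entire combinatorial content of the theorem has already been absorbed into Lemma~\ref{lemma:basin_betweenness}, whose proof is where one actually has to reason about counts of shortest paths through $u$ and $x$. The pay-off of organizing the paper around a basin-dominance lemma plus the generic Theorem~\ref{thm:generalbasinw} is precisely that this final step becomes mechanical and uniform across closeness, harmonic, and betweenness centralities; the same two-line deduction produces the corresponding rank semi-monotonicity theorem in each case.
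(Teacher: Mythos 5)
Your proposal is correct and matches the paper exactly: the theorem is stated there as an immediate consequence of Lemma~\ref{lemma:basin_betweenness} (basin dominance of betweenness) combined with Theorem~\ref{thm:generalbasinw}, which is precisely the two-step deduction you give.
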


\section{Conclusions and Future Work}

Table~\ref{tab:summ} summarizes the results of this paper along with those of~\cite{rank_mon,boldi_furia_vigna_2023,axioms}. For all the negative results, we have an infinite family of counterexamples (for instance, there are infinitely many graphs on which closeness is shown to be not strictly semi-monotone).

The notion of basin dominance turned out to be the key idea in all proofs of semi-monotonicity. It would be interesting to investigate whether
basin dominance applies to other geometric measures, or even other centrality measures based on shortest paths, as in
that case one gets immediately rank semi-monoto\-nicity.

Proving or disproving score and (strict) rank semi-monotonicity for other measures (in particular, for the spectral ones) remains an open problem.

\begin{table}[h]
    \centering
    {\scriptsize
        \begin{tabular}{l||l|l||l|l}
                                & \multicolumn{2}{c||}{undirected}       & \multicolumn{2}{c}{directed~\cite{rank_mon,axioms}}                                                         \\\hline
                                & \multicolumn{1}{c|}{score}             & \multicolumn{1}{c||}{rank}                          & \multicolumn{1}{c|}{score} & \multicolumn{1}{c}{rank} \\\hline
            Closeness           & monotone~\cite{boldi_furia_vigna_2023} & {\bf semi-monotone}                                 & monotone                   & monotone                 \\
            Harmonic centrality & monotone~\cite{boldi_furia_vigna_2023} & {\bf strictly semi-mon}.                            & monotone                   & strictly monotone        \\
            Betweenness         & {\bf not semi-monotone}                & {\bf semi-monotone}                                 & not monotone               & not monotone             \\
        \end{tabular}}
    \vspace*{3mm}
    \caption{\label{tab:summ}Summary of the results about monotonicity obtained in this paper (in boldface) and in~\cite{rank_mon,boldi_furia_vigna_2023,axioms}. All results are about (strongly) connected graphs.}
\end{table}

\section*{Acknowledgments}
This work was supported in part by project SERICS (PE00000014) under the NRRP MUR program funded by the EU - NGEU. Davide D'Ascenzo has been financially supported by the Italian National PhD Program in Artificial Intelligence (DM 351 intervento M4C1 - Inv. 4.1 - Ricerca PNRR), funded by EU - NGEU.

\bibliography{bibliography}

\begin{thebibliography}{1}
\providecommand{\url}[1]{{#1}}
\providecommand{\urlprefix}{URL }
\expandafter\ifx\csname urlstyle\endcsname\relax
  \providecommand{\doi}[1]{DOI~\discretionary{}{}{}#1}\else
  \providecommand{\doi}{DOI~\discretionary{}{}{}\begingroup
  \urlstyle{rm}\Url}\fi

\bibitem{anthonisse}
Anthonisse, J.M.: The rush in a directed graph.
\newblock Journal of Computational Physics pp. 1--10 (1971)

\bibitem{bavelas}
Bavelas, A.: A mathematical model for group structures.
\newblock Human Organization \textbf{7}, 16--30 (1948)

\bibitem{bavelas2}
Bavelas, A.: Communication patterns in task-oriented groups.
\newblock Journal of the Acoustical Society of America \textbf{22}, 725--730
  (1950)

\bibitem{beauchamp}
Beauchamp, M.A.: An improved index of centrality.
\newblock Behavioral Science \textbf{10}(2), 161--163 (1965)

\bibitem{boldi2023score}
Boldi, P., D'Ascenzo, D., Furia, F., Vigna, S.: Score and rank
  semi-monotonicity for closeness, betweenness and harmonic centrality (2023)

\bibitem{boldi_furia_vigna_2023}
Boldi, P., Furia, F., Vigna, S.: Monotonicity in undirected networks.
\newblock Network Science p. 1–23 (2023).
\newblock \doi{10.1017/nws.2022.42}

\bibitem{rank_mon}
Boldi, P., Luongo, A., Vigna, S.: Rank monotonicity in centrality measures.
\newblock Network Science \textbf{5}(4), 529–550 (2017).
\newblock \doi{10.1017/nws.2017.21}

\bibitem{axioms}
Boldi, P., Vigna, S.: Axioms for centrality.
\newblock Internet Mathematics \textbf{10}, 222 -- 262 (2014)

\bibitem{freeman}
Freeman, L.C.: Centrality in social networks conceptual clarification.
\newblock Social Networks \textbf{1}, 215--239 (1978)

\end{thebibliography}

\end{document}